\patchcmd{\ttlh@hang}{\parindent\z@}{\parindent\z@\leavevmode}{}{}
\patchcmd{\ttlh@hang}{\noindent}{}{}{}
\titleformat{\section}{\large\bfseries}{\thesection.}{.5em}{}
\titlespacing*{\section}{0pt}{*3}{*2}
\titleformat{\subsection}{\normalfont\bfseries}{\thesubsection.}{.5em}{}
\titlespacing*{\subsection} {0pt}{*3}{*2}
\titleformat{\subsubsection}{\normalfont\bfseries}{\thesubsubsection.}{.5em}{}
\titlespacing*{\subsubsection} {0pt}{*3}{*2}
\long\def\symbolfootnote[#1]#2{\begingroup \def\thefootnote{\fnsymbol{footnote}}\footnote[#1]{#2}\endgroup}
\theoremstyle{plain} \newtheorem{theorem}{Theorem}[section]
\newtheorem{lemma}{Lemma}[section]
\newtheorem{corollary}{Corollary}[section]
\theoremstyle{definition}
\crefname{corollary}{Corollary}{Corollaries}
\crefname{problem}{Problem}{Problems}
\crefname{remark}{Remark}{Remarks}
\crefname{appsec}{Appendix}{Appendices}
\crefname{lemma}{Lemma}{Lemmas}
\crefname{enumi}{}{}
\crefname{equation}{}{}
\makeatletter \@ifclassloaded{article}{\numberwithin{equation}{section} \numberwithin{remark}{section}}{
  }\makeatother
\crefname{equation}{}{}
\DeclareMathOperator{\E}{\mathsf{E}} 		\DeclareMathOperator{\Var}{\mathsf{Var}} 	 \DeclareMathOperator*{\argmax}{\text{arg\,max\;}}  
\newcommand{\R}{\mathbb{R}}
\newcommand{\nonNegSet}{\R_{\geq0}}
\newcommand{\Hyp}{\mathrm{H}} 	
\newcommand{\given}{\,|\,}
\newcommand{\bbgiven}{\,\Big|\,}
\newcommand{\param}{\ensuremath{\theta}}
\newcommand{\paramRV}{\ensuremath{\Theta}}
\newcommand{\paramRVSet}{\ensuremath{\Theta}}
\newcommand{\dInt}{\mathrm{d}}	\newcommand{\indd}[1]{\ensuremath{\mathbf{1}_{#1} }} \newcommand{\ind}[1]{\indd{\{#1\}} } \newcommand{\stat}[1]{\ensuremath{\mathbf{t}_{#1}}} \newcommand{\statWOa}{\ensuremath{{\stat{}}}} \newcommand{\transkernel}[1][n]{\ensuremath{\xi_{\stat{#1}}}}
\newcommand{\detConstr}[1][m]{\bar{\alpha}^{#1}}
\newcommand{\estConstr}[1][m]{\bar{\beta}^{#1}}
\newcommand{\updateProbMeasure}[1][n]{\ensuremath{Q_{\stat{#1}}}}
\newcommand{\updateProbMeasureTilde}[1][n]{\ensuremath{\tilde Q_{\stat{#1},\postProbTuple[#1]}}}
\newcommand{\updateProbMeasureTildeScaled}[2][n]{\ensuremath{\tilde Q_{\stat{#1},#2\postProbTuple[#1]^\bullet}}}
\newcommand{\auxVarCostOptTilde}[1]{\ensuremath{{\tilde D^\star_{#1}}}}
\newcommand{\RV}{\ensuremath{X}}
\newcommand{\RVidx}[1][n]{\ensuremath{\RV_{#1}}}
\newcommand{\SeqDataRV}{\ensuremath{\mathcal{X}}}
\newcommand{\obsIdx}[1]{\ensuremath{\boldsymbol{x}_{#1}}}
\newcommand{\obs}{\obsIdx{n}}
\newcommand{\obsScalar}[1]{x_{#1}}
\newcommand{\xnew}{x}	
\newcommand{\stopR}{\ensuremath{\Psi} }
\newcommand{\stopOpt}{\ensuremath{\stopR^\star}}
\newcommand{\dec}{\ensuremath{\delta}}
\newcommand{\decOpt}{\ensuremath{\dec^\star}}
\newcommand{\stopAt}[1][]{\ensuremath{\Phi_{n#1}}}
\newcommand{\costDet}[1][n]{\ensuremath{\lambda}}
\newcommand{\costEst}[1][n]{\ensuremath{\mu}}
\newcommand{\costDetOpt}[1][n]{\ensuremath{\lambda^\star}}
\newcommand{\costEstOpt}[1][n]{\ensuremath{\mu^\star}}
\newcommand{\est}[1]{\ensuremath{\hat\param_{#1}}}
\newcommand{\policy}{\ensuremath{\pi}}
\newcommand{\policySet}{\ensuremath{\Pi}}
\newcommand{\policyOptC}{\ensuremath{\policy_{\costDet,\costEst}^\star}}
\newcommand{\policyOptCerr}{\policy_{\costDet_{\detConstr[],\estConstr[]}^\star,\costEst_{\detConstr[],\estConstr[]}^\star}^\star}
\newcommand{\policyOptErr}{\policy_{\detConstr[],\estConstr[]}^\star}
\newcommand{\policyFull}{\ensuremath{\{\stopR_n, \dec_n, \est{1,n},\ldots,\est{M,n}\} _{0\leq n \leq N}}}
\newcommand{\policyOptFull}{\ensuremath{\{\stopR_n^\star, \dec_n^\star, \est{1,n}^\star,\ldots,\est{M,n}^\star\} _{0\leq n \leq N}}}
\newcommand{\errorDet}[2]{\ensuremath{\alpha_{#1}^{#2}}}
\newcommand{\errorEst}[2]{\ensuremath{\beta_{#1}^{#2}}}
\newcommand{\stopRegionCompl}[1][n]{\ensuremath{\bar{\mathcal{S}}_{#1}}}
\newcommand{\stopRegion}[1][n]{\ensuremath{{\mathcal{S}_{#1}}}}
\newcommand{\stopRegionDec}[2]{\ensuremath{{\mathcal{S}_{#1}^{#2}}}}
\newcommand{\stopRegionBound}[1][n]{\ensuremath{{\partial\mathcal{S}_{#1}}}}
\newcommand{\stopRegionBoundTilde}[1][n]{\ensuremath{{\partial\mathcal{\tilde S}_{#1}}}}
\newcommand{\stopRegionDecMSPRT}[1]{\ensuremath{{\mathcal{S}_{\text{MSPRT}}^{#1}}}}
\newcommand{\Gam}{\mathrm{Gam}}
\newcommand{\unif}{\mathcal{U}}
\newcommand{\norm}[2]{\ensuremath{\mathcal{N}\left(#1,#2\right)}}
\newcommand{\sampleSpaceObs}[1][]{\ensuremath{\Omega_X^{#1}}}
\newcommand{\parameterSpace}{\ensuremath{\mathbb{R}}}
\newcommand{\sampleSpaceHyp}{\ensuremath{\Omega_\Hyp}}
\newcommand{\stateSpaceObs}[1][]{\ensuremath{E_X^{#1}}}
\newcommand{\stateSpaceParam}{\ensuremath{E_{\paramRVSet}}}
\newcommand{\stateSpaceStat}[1][]{\ensuremath{E_\statWOa^{#1}}}
\newcommand{\stateSpaceHyp}[1][]{\ensuremath{E_\Hyp}}
\newcommand{\logLikelihoodRatio}{\ensuremath{\eta}}
\newcommand{\postProbwo}{\ensuremath{e}}
\newcommand{\postProbTuple}[1][n]{\ensuremath{\mathbf{\postProbwo}_{#1}}}
\newcommand{\postProb}[2][n]{\ensuremath{\postProbwo_{#2,#1}}}
\newcommand{\transkernelPostVar}[3]{\ensuremath{\tilde\xi_{#3}(#1, #2)}}
\newcommand{\stateSpacePostProb}{\ensuremath{E_{\postProbTuple[]}}}
\newcommand{\metricPostProb}{\ensuremath{\mathcal{E}_{\postProbTuple[]}}}
\newcommand{\invGam}{\ensuremath{\mathrm{I}\Gam}}
\newcommand{\shape}{\ensuremath{a}}
\newcommand{\scale}{\ensuremath{b}}
\newcommand{\shapePost}{\ensuremath{{\shape_n}}}
\newcommand{\scalePost}{\ensuremath{{\scale_{m,n}}}}
\newcommand{\shapePP}{\ensuremath{{\tilde{\shape}_n}}}
\newcommand{\scalePP}{\ensuremath{{\tilde{\scale}_{m,n}}}}
\newcommand{\scaleFct}{\ensuremath{\gamma}}
\newlength{\imgWidthSingle}
\newlength{\imgWidthDouble}
 \acrodef{AWGN}{additive white Gaussian noise}
\acrodef{ASK}{Amplitude Shift Keying}
\acrodef{iid}{independent and identically distributed}
\acrodef{SNR}{Signal-to-Noise Ratio}
\acrodef{MSE}{Mean-Squared Error}
\acrodef{MAE}{Mean Absolute Error}
\acrodef{SPRT}{Sequential Probability Ratio Test}
\acrodef{MMSE}{Minimum Mean-Squared Error}
\acrodef{MSPRT}{Matrix Sequential Probability Ratio Test}
\acrodef{BPSK}{Binary Phase Shift Keying}
\acrodef{pdf}{probability density function}
\acrodef{LP}{Linear Program} 
\begin{document}
\title{\textbf{\Large   Bayesian Sequential Joint Detection and Estimation under Multiple Hypotheses}}

\date{}

\maketitle

\author{
	\begin{center}
		\vskip -1cm
		\textbf{\large Dominik Reinhard\textsuperscript{1}, Michael Fau\ss{}\textsuperscript{2}, and Abdelhak M. Zoubir\textsuperscript{1}} \\
		\textsuperscript{1} Signal Processing Group, Technische Universit\"at Darmstadt, Darmstadt, Germany\\
		\textsuperscript{2} Department of Electrical Engineering, Princeton University, Princeton, NJ 08544, USA
	\end{center}
}
\symbolfootnote[0]{\normalsize Address correspondence to Dominik Reinhard,
	Signal Processing Group, Technische Universit\"at Darmstadt, Merckstra\ss{}e 25, 
	64283 Darmstadt, Germany; E-mail: reinhard@spg.tu-darmstadt.de
}

\algnewcommand{\Inputs}[1]{\State \textbf{inputs:} #1
}
\algnewcommand{\Initialize}[1]{\State \textbf{initialize:} #1
}

{\small \noindent\textbf{Abstract:}

 We consider the problem of jointly testing multiple hypotheses and estimating a random parameter of the underlying distribution. This problem is investigated in a sequential setup under mild assumptions on the underlying random   process.
 The optimal method minimizes the expected number of  samples while ensuring that the average detection/estimation  errors do not exceed a certain level.
 After converting the constrained problem to an unconstrained one, we characterize the general solution by a non-linear Bellman equation, which is parametrized by a set of cost coefficients. A strong connection between the derivatives of the cost function with respect to the coefficients and the detection/estimation errors of the sequential procedure is derived. 
 Based on this fundamental property, we further show that for suitably chosen cost coefficients the solutions of the constrained and the unconstrained problem coincide.
 We present two approaches to finding the optimal coefficients. For the first approach, the final optimization problem is converted into a linear program, whereas the second approach solves it with a projected gradient ascent.
 To illustrate the theoretical results, we consider two problems for which the optimal schemes are designed numerically. Using Monte Carlo simulations, it is validated that the numerical results agree with the theory.
 }

\vspace*{1em}
{\small \noindent\textbf{Keywords:} 
Joint detection and estimation, Multiple hypotheses; Sequential analysis; Stopping time
}
\\[1em]
{\small \noindent\textbf{Subject Classifications:} 
	62L10; 62L12; 62L15; 90C05; 93E10.

\section{Introduction}
In many applications, hypothesis testing and parameter estimation occur in a coupled way and both are of equal interest. More precisely, one has to decide between two or more hypotheses and, depending on the decision, estimate one or more, possibly random, parameters of the underlying distribution. This problem goes back to Middleton \emph{et al.}, who initially addressed this problem in the late 1960s. In \citep{Middleton1968Simultaneous}, a Bayesian framework was used to find a jointly optimal solution. After extending the problem to multiple hypotheses \citep{fredriksen1972simultaneous}, the interest in the topic declined in the literature, but the topic regained more attention \citep{chaari2013fast,fauss2017sequential,jajamovich2012Minimax,Li2016Optimal,Momeni2015Joint,moustakides2012joint,reinhard2019JointSNR,reinhard2019bayesian,reinhard2016,tajer2010Optimal,Vo2010Joint, jan2018iterative} since 2000. 

Joint detection and estimation is of particular interest in signal processing and communications. For example, in cognitive radio, the secondary user has to detect the primary user and estimate the possible interference such that dynamic spectrum access can be performed \citep{Yilmaz2014Sequential}. In a general communication setup, one is interested in detecting the presence of a signal and estimating the channel\citep{jan2018iterative}.
There exist many more applications, such as radar\citep{tajer2010Optimal}, speech processing \citep{Momeni2015Joint}, change point detection and estimation of the time of change \citep{boutoille2010hybrid}, optical communications \citep{wei2018simultaneous}, detection and estimation of objects from images \citep{Vo2010Joint} and biomedical engineering \citep{chaari2013fast,jajamovich2012Minimax}.

In applications such as radar, where the aim is to detect a single target and to estimate, for example, the velocity of the target, formulating the joint detection and estimation problem with two hypotheses is sufficient.
Nevertheless, there exist a variety of applications where we have to decide between multiple hypotheses and to estimate simultaneously one or more parameters.
In communications, for example, one wants to decode a received signal and estimate some parameters of the signal model \citep{Goldsmith2005Wireless}, e.g., the unit sample response of the channel or the noise power.
If the alphabet consists of more than two symbols, testing multiple hypotheses is required.
Another example is state estimation for smart grids, where topological uncertainties can be modeled using multiple hypotheses \citep{huang2014Adaptive,yilmaz2015sequential}. 

Constraints on the latency of a system arise in many signal processing applications.
Therefore, it is required to perform inference as fast as possible and ensure at the same time its quality.
This leads to the field of sequential analysis that was pioneered by Wald in the late 1940 \citep{wald1947sequential}.
Sequential methods stop sensing new data once one is confident enough about the phenomenon of interest.
Compared to their fixed-sample-size counterparts, these methods require on average significantly less samples while they reach the same inference quality.
An overview on sequential detection and estimation can be found in \citep{tartakovsky2014sequential} and \citep{ghosh2011sequential}, respectively.

Combining the key ideas of sequential analysis with those of joint detection and estimation leads to a powerful framework which is applicable for a wide range of signal processing tasks. In this framework, the average number of used samples should be minimized, while the detection and estimation errors are controlled.

\subsection{State of the Art}
The design of optimal procedures for the problem of sequential joint detection and estimation is an important yet little studied topic.
There exist well established sequential tests in the literature that include an estimation step, such as the generalized \ac{SPRT} or the adaptive \ac{SPRT}.
For those tests, only the outcome of the hypothesis test is of interest and hence, it is not possible to control the quality of the estimate.
Details about the generalized \ac{SPRT} and the adaptive \ac{SPRT} can be found in \citep[Section 5.4]{tartakovsky2014sequential}.

In \citet{birdsall1973sufficient}, the authors investigated the sequential update of sufficient statistics for simultaneous sequential detection and estimation. However, the authors did not provide an optimal solution for the joint detection and estimation problem.  
The problem of sequential detection and state estimation for Markov processes was investigated in \citet{buzzi2006joint, grossi2008sequential} .
Although the quality of the estimator, in that case the probability of a correct estimation, is of primary interest, it does not affect the number of used samples.
Later on, a test that incorporates both uncertainties, the one about the true hypothesis and the true states, into the number of used samples was designed for a similar problem \citep{grossi2009sequential}.

The design of optimal procedures for the problem of sequential joint detection and estimation was initially treated by Y{\i}lmaz \emph{et al.} In \citet{yilmaz2015sequential}, the aim is to decide between two hypotheses and, if the null is rejected, to estimate a random parameter. The approach in \citet{yilmaz2015sequential} was later extended to multiple hypotheses \citep{Yilmaz2016Sequential} and applied to joint spectrum sensing and channel estimation \citep{Yilmaz2014Sequential}. The objective in the work by Y{\i}lmaz \emph{et al.} \citep{Yilmaz2014Sequential, yilmaz2015sequential, Yilmaz2016Sequential} is to design a scheme that minimizes the number of used samples for \emph{every} set of observations while keeping a combined detection/estimation cost function below a certain level. The cost function is a linear combination of detection and estimation errors with combination weights set by the designer. Although the detection and estimation errors can be balanced by varying the combination weights, the main drawback of that approach is the difficulty in designing  procedures with predefined performance in terms of error probabilities and estimation errors.

In \citet{reinhard2016}, we investigated the problem of sequential joint detection and estimation under distributional uncertainties. In \citet{fauss2017sequential}, the problem of sequential joint signal detection and \ac{SNR} estimation was addressed. In that work, the \ac{SNR} was modeled as a deterministic unknown parameter and the detection and estimation errors were kept below predefined limits.
More recently, we developed a Bayesian framework for two hypotheses.
In that work, the \emph{average} number of used samples is minimized while the detection and estimation errors are kept below predefined levels\citep{reinhard2019bayesian}. Contrary to the work by Y{\i}lmaz \emph{et al.} \citep{Yilmaz2014Sequential, yilmaz2015sequential, Yilmaz2016Sequential} which only allows a balance of detection and estimation errors, the framework in \citet{reinhard2019bayesian} comes with an approach to choose the cost coefficients such that constraints on the detection and estimation errors are fulfilled and such that the resulting scheme uses on average as few samples as possible.
The framework was later applied to joint signal detection and \ac{SNR} estimation \citep{reinhard2019JointSNR}.
Sequential joint detection and estimation in distributed sensor networks was addressed in \citet{reinhard2020Distributed} and \citet{reinhard2020DistributedNonGaussian}.

\subsection{Contributions and Outline}
In this work, we show how the framework in \citet{reinhard2019bayesian} can be recast to the general case of multiple hypotheses and as a result we unify the theory so that the work in \citet{reinhard2019bayesian} becomes a special case of this work.
The aim is to design a strictly optimal method, i.e., a method that fulfills constraints on the detection and estimation errors while it uses on average a minimum number of samples.
To the best of our knowledge, the design and analysis of strictly optimal procedures for sequential joint detection and estimation for the multiple hypotheses case has not been addressed in the open literature so far.
Although this problem might look similar to the one in \citet{reinhard2019bayesian}, the multiple hypotheses case differs significantly from its binary counterpart.
These differences are captured in the cost for stopping the scheme and, as a consequence, also in the optimal cost function that characterizes the procedure.
To design the procedure, the coefficients that parameterize the optimal cost function have to be chosen such that the constraints on the error probabilities and the \acp{MSE} are fulfilled.
To this end, a connection between the performance measures, i.e., the detection and estimation errors, and the partial derivatives of the cost function with respect to the corresponding coefficients is exploited as in \citet{reinhard2019bayesian}.
However, due to the structural differences of the optimal cost function in the binary and the multiple hypotheses scenario, a connection between the error probabilities and the partial derivatives of the cost function is a non-obvious one.
The proof of this property is given in this work. Although similar tools as in \citet{reinhard2019bayesian} are used, the two proofs differ technically. 
Based on this connection, the original design problem can be, as its binary counterpart, converted to a linear program.
In addition, we provide a second design algorithm that is based on a projected gradient ascent which was not a subject in \citet{reinhard2019bayesian}.

Note that a small amount of some results in this paper were presented at the 2020 IEEE International Conference on Acoustics, Speech, and Signal Processing (ICASSP)\citep{reinhard2020Multiple}.

This work is structured as follows: in \cref{sec:probForm}, we summarize the underlying assumptions and provide a formal problem formulation. \cref{sec:optStopping} describes the conversion of this problem to an optimal stopping problem. The solution of this problem, which can be obtained by dynamic programming, is characterized by a set of non-linear Bellman equations. The properties of the resulting cost functions are discussed in \cref{sec:propCostFct}. In \cref{sec:optCostCoeff}, we present two approaches for selecting the  coefficients, which parametrize the cost functions, such that predefined detection and estimation performances are achieved and the resulting scheme is of minimum average run-length. To illustrate the proposed approaches, \cref{sec:numResults} provides two numerical examples. 
 \section{Problem Formulation}\label{sec:probForm}
Let $\SeqDataRV= (X_n)_{n \geq 1}$ be a sequence of random variables, which is generated under one out of $M$ different hypotheses $\Hyp_m$, $m=1,\ldots,M$. Under each hypothesis, the distribution of the sequence $\SeqDataRV$ is parametrized by a single scalar and real random parameter $\paramRV_m$ with known distribution.
Although the problem can be formulated with multiple parameters under each hypothesis, for the sake of simplicity we focus on the single parameter case. The work on the multiple parameter case will appear elsewhere.
The tuple $(\RVidx, \paramRVSet, \Hyp)$ is defined on the probability space $(\sampleSpaceObs \times \parameterSpace \times \sampleSpaceHyp, \mathcal{B}(\sampleSpaceObs) \otimes \mathcal{B}(\mathbb{R}) \otimes \mathcal{F}_\Hyp , P)$ in the metric state space $(\stateSpaceObs \times \stateSpaceParam \times \stateSpaceHyp, \mathcal{E}_{X} \times \mathcal{E}_{\paramRVSet} \times \mathcal{E}_\Hyp)$. The density of $P$ with respect to a
suitable reference measure, for example, the standard Lebesgue measure when $\stateSpaceObs$ is Euclidean, is denoted by $p$.
Hence, the $M$ composite hypotheses can be written as
\begin{align*}
 \Hyp_m:\, \SeqDataRV\given\paramRV_m \sim p(\mathbf{x}\given\Hyp_m, \param_m)\,,\,\paramRV_m\sim p(\param_m\given\Hyp_m)\,,
\end{align*}
where $m=1,\ldots,M$ and $\mathbf{x}$ are the observations of the random sequence $\mathcal{X}$.
Note that the underlying hypothesis can, in the most general case, affect both, the distribution of the data given the parameter, as well as the distribution of the parameter.
Since the $M$ hypotheses are composite, we do not only want to decide for the true hypothesis, but we are also interested in estimating the underlying parameter $\param_m$.
Hence, the problem is one of \emph{joint detection and estimation}. Moreover, our aim is to solve this problem in a sequential setup, i.e., we observe the sequence $\SeqDataRV$ sample by sample until we are confident enough about the true hypothesis and the underlying parameter. Thus, the problem becomes one of \emph{sequential joint detection and estimation}. 
The decision maker is restricted to take at most $N$ samples.
That is, irrespective of the certainty about the true hypothesis and the true parameter, the sequential scheme has to stop sampling and to infer the quantities of interest.
For a brief discussion on how to choose $N$ in practice, see \cref{sec:optProblem}.

Before going into the details of the optimization problem, we first summarize the assumptions for the proposed framework. Then, some fundamentals of sequential joint detection and estimation are provided, followed by a discussion of the performance metrics.

\subsection{Assumptions and Notation}\label{sec:assumptions}
\begin{enumerate}[label={\textbf{A\arabic*}}]
\item \label{ass:HparamConst}The true hypothesis and the true parameter do not change during the observation period.
\item The $M$ hypotheses are mutually exclusive.\item \label{ass:suffStat} There exists a sufficient statistic $\stat{n}(\obs)$ in a metric state space $(\stateSpaceStat, \mathcal{E}_{\stat{}})$ such that for all $n=1,\ldots,N$, $m=1,\ldots,M$,
\begin{align*}
p(\obsScalar{n+1}, \param_m, \Hyp_m \given \obs) = p(\obsScalar{n+1}, \param_m, \Hyp_m | \stat{n})
\end{align*}
with some initial statistic $\stat{0}$ and a transition kernel
\begin{align*}
\stat{n+1}(\obsIdx{n+1}) = \xi(\stat{n},\obsScalar{n+1}) =: \transkernel(\obsScalar{n+1})\,.
\end{align*}
\item \label{ass:finiteSOM}The second order moments of the random parameters $\paramRV_m$ conditioned on the hypotheses $\Hyp_m$ exist and are finite:
\begin{align*}
 \E[\paramRV_m^2\given\Hyp_m] < \infty\,,\quad \forall m=1,\ldots,M
\end{align*}

\end{enumerate}
For the sake of compactness, the integration domain is not indicated explicitly in integrals that are taken over the whole domain, e.g.,
\begin{align*}
 \E[X] = \int_{\stateSpaceObs} xp(x)\dInt x \equiv \int xp(x)\dInt x\,. 
\end{align*}
Moreover, the dependency of functions, such as estimators, on the data or the sufficient statistic is dropped for simplicity. The dependency should be clear from the context.
The indicator function of the event $\mathcal{A}$ is denoted by $\ind{\mathcal{A}}$.

\subsection{Fundamentals of Sequential Joint Detection and Estimation}
Since the sequence $\SeqDataRV$ is observed sample by sample, the state space of the observations grows with every time instant. To keep the problem tractable, the observations $\obsIdx{n}$ are replaced by the sufficient statistic $\stat{n}$ as stated in \cref{ass:suffStat}, which serves as a low-dimensional representation of the data. Moreover, its state space is independent of the time instant $n$.

For the joint detection and estimation problem, a decision rule, as well as a set of estimators (one under each hypothesis) have to be found. The decision rule maps the sufficient statistic to a decision in favor of one particular hypothesis.
The estimators, on the other hand, map the sufficient statistic onto a point in the state space of the parameter under each hypothesis. Mathematically, the decision rule and the estimators are defined as
\begin{align*}
 \dec_n& : \stateSpaceStat \mapsto  \Delta_M = \{1,\ldots,M\} \,,\\
 \est{m,n}&: \stateSpaceStat \mapsto \stateSpaceParam\,, \quad m=1,\ldots,M\,.
\end{align*}
The decision rule and the estimators depend on the number of samples $n$, which is indicated by the subscript.
In a sequential framework, the number of used samples is not given \emph{a priori}. Thus, a stopping rule needs to be found for deciding whether to stop sampling or not. Mathematically, the stopping rule is defined as
\begin{align*}
 \stopR_n& : \stateSpaceStat \mapsto \Delta = \{0,1\}\,,
\end{align*}
where $0$ and $1$ stand for continue sampling and stop sampling, respectively. Hence, the run-length of the sequential scheme can be defined as
\begin{align*}
 \tau = \min\,\{n\geq 1: \stopR_n = 1\}\,.
\end{align*}
Note that, since the sopping rule evaluates the sufficient statistic, which is a random variable, the run-length is also a random variable.
Contrary to, e.g., \citet{novikov2009optimal, novikov2009multiple}, where the stopping and decision rules are probabilities instead of hard decisions, there is no need for randomization in this work. The reason will become clear later.
The collection of the decision rule, the estimators and the stopping rule is referred to subsequently as policy, and is defined as
\begin{align*}
 \policy = \policyFull\,.
\end{align*}
The set of all feasible policies is given by
\begin{align*}
 \policySet = \bigl\{ \policy: \stateSpaceStat \mapsto \Delta^N \times \Delta_M^N \times \stateSpaceParam^{MN} \bigr\}\,.
\end{align*}
\subsection{Performance Measures}\label{subsec:perfMeasures}
One common performance measure for sequential inference is the average run-length, i.e., the number of samples which are used on average.
Besides that, there exist different performance measures when dealing with multiple hypotheses, e.g., the probability of accepting $\Hyp_i$ when $\Hyp_m$ is true $P(\{\dec_\tau = i\}\given\Hyp_m)$, $i\neq m$, or the probability of falsely rejecting hypothesis $\Hyp_m$ $P(\{\dec_\tau \neq m \}\given \Hyp_m)$.
In this work, we focus on the latter although an extension to the former should not be too difficult.
As for the estimation performance, there exist many measures such as the \ac{MSE} or the \ac{MAE}. In this work, the \ac{MSE} is used to quantify the estimation performance.
More formally, the error metrics can be written as
\begin{align}
 \errorDet{n}{m}(\stat{n}) & =  \E[\ind{\dec_\tau\neq m}\given\Hyp_m, \stat{n}, \tau \geq n]\,, \label{eq:perfMeasuresDet} \\
 \errorEst{n}{m}(\stat{n}) & =  \E[\ind{\dec_\tau= m}(\param_m - \est{m,\tau})^2\given\Hyp_m, \stat{n}, \tau \geq n]\,, \label{eq:perfMeasuresEst}
\end{align}
where
\begin{align*}
 \bigl\{\dec_\tau=m\bigr\} := \bigl\{\stat{n}\in\stateSpaceStat: \dec_\tau(\stat{n}) = m\bigr\}\,.
\end{align*}
This short hand notation for subspaces of the state space is used throughout the paper.

The quantity $\errorDet{n}{m}(\stat{n})$ denotes the probability that hypothesis $\Hyp_m$ is rejected given that $\Hyp_m$ is true and that the scheme is in state $\stat{n}$ at time $n$.
Similarly, $\errorEst{n}{m}(\stat{n})$ is the \ac{MSE} under the condition that the scheme is in state $\stat{n}$ at time $n$. The estimation error is set to zero if the scheme decides erroneously for a hypothesis.

Alternatively, we can express the performance measures in \cref{eq:perfMeasuresDet} and \cref{eq:perfMeasuresEst} recursively. These definitions, which follow from the Chapman-Kolmogorov backward equations \citep{EoM1990}, are given by
\begin{align} \label{eq:perfMeasuresRecFirst}
    \errorDet{n}{m}(\stat{n}) & =  \stopR_n\ind{\dec_n\neq m}  + (1-\stopR_n) \E[\errorDet{n+1}{m}(\stat{n+1})\given\Hyp_m, \stat{n}]\,,\\
    \errorDet{N}{m}(\stat{N}) & =  \stopR_N\ind{\dec_N\neq m}\,, \\
    \errorEst{n}{m}(\stat{n}) & =  \stopR_n\ind{\dec_n= m}\E[(\paramRV_m - \est{m,n})^2\given\Hyp_m,\stat{n}]  + (1-\stopR_n)\E[\errorEst{n+1}{m}(\stat{n+1})\given\Hyp_m, \stat{n}]\,, \\
    \errorEst{N}{m}(\stat{N}) & =  \stopR_N\ind{\dec_N= m}\E[(\paramRV_m - \est{m,N})^2\given\Hyp_m,\stat{N}]\,.
    \label{eq:perfMeasuresRecLast}
\end{align}
The importance of these recursive definitions will become clearer later. For the sake of compactness, the following short-hand notations will be used throughout the paper
\begin{align*}
 \errorDet{}{m} := \errorDet{0}{m}(\stat{0})\,,\quad \errorEst{}{m} := \errorEst{0}{m}(\stat{0})\,.
\end{align*}

\subsection{Formulation as an Optimization Problem}\label{sec:optProblem}
The aim is to design a sequential method which is of minimum expected run-length while the detection and the estimation errors do not exceed certain levels.
As mentioned earlier, the decision maker is allowed to observe at most $N$ samples, i.e., the sequential scheme is truncated. The maximum number of samples might be determined by the application, e.g., due to latency constraints.
If the choice is left to the designer, it should be chosen as large as possible given the computational resources.
However, if $N$ is chosen too small, i.e., the constraints on the detection and/or the estimation errors cannot be fulfilled, the design problem does not have a feasible solution and the algorithm(s) presented in \cref{sec:optCostCoeff} will not converge.

Formally, the problem can be written as the following optimization
\begin{align}\label{eq:constrProblem}
\begin{split}
 & \min_{\policy\in\policySet}\,\E[\tau]\,,\;\stopR_N = 1\,,\\
 \text{subject to}\quad& \errorDet{}{m} \leq \detConstr\,,\quad  \errorEst{}{m} \leq \estConstr\,, \quad m=1,\ldots,M\,,
 \end{split}
\end{align}
where the upper limits on the detection and estimation errors are denoted by $\detConstr$ and $\estConstr$, respectively.
Instead of solving \cref{eq:constrProblem} directly, we first consider the following unconstrained problem

\begin{align}\label{eq:unconstrProb}
 \min_{\policy\in\policySet}\, \left\{\E[\tau] + \sum_{m=1}^M\left( \costDet_m p(\Hyp_m)\errorDet{}{m} + \costEst_mp(\Hyp_m)\errorEst{}{m} \right)\right\}\,,
\end{align}
where $\costDet_m$ and $\costEst_m$, $m=1,\ldots,M$, are some non-negative and finite cost coefficients. Although the prior probabilities $p(\Hyp_m)$ are not subject to optimization, they are included to simplify notation. It is shown in \cref{sec:optCostCoeff} that for suitable chosen $\costDet_m$ and $\costEst_m$, the solutions of \cref{eq:unconstrProb} and \cref{eq:constrProblem} coincide.
 \section{Reduction to an Optimal Stopping Problem}\label{sec:optStopping}
In order to solve the problem in \cref{eq:unconstrProb}, it first has to be converted to an optimal stopping problem. Optimal stopping theory is a framework which is widely used when designing sequential schemes. In that framework, a stopping rule has to be found which trades-off the expected run-length and the inference quality.  More details about optimal stopping and its applications are given in \citet{peskir2006optimal}.
To derive an optimal stopping problem, we proceed as in \citet{reinhard2019bayesian} and optimize \cref{eq:unconstrProb} first with respect to the decision rule and then with respect to the estimators. To do so, the two summands in \cref{eq:unconstrProb}, the expected run-length and the combined detection and estimation errors have to be reformulated. 
The expected run-length can be expressed as
\begin{align*}
 \E[\tau] & = \sum_{n=1}^N \sum_{m=1}^M \int\int n \stopAt p(\Hyp_m, \param_m, \obs)\dInt\param_m\dInt\obs \\
	  & = \sum_{n=1}^N \int n \stopAt p(\obs)\dInt\obs\,,
\end{align*}
where we used the short-hand notation
\begin{align*}
 \stopAt := \stopR_n\prod_{k=0}^{n-1}(1-\stopR_k)\,.
\end{align*}

Similarly, the weighted sum of detection and estimation errors in \cref{eq:unconstrProb} can be simplified to
\begin{align}
 \sum_{m=1}^M\left( \costDet_m p(\Hyp_m)\errorDet{}{m} + \costEst_mp(\Hyp_m)\errorEst{}{m} \right) = &\sum_{n=0}^N \int \sum_{m=1}^M  \stopAt \ind{\dec_n=m} D_{m,n}(\stat{n}(\obs)) p(\obs)\dInt\obs \label{eq:sumOfD}\,.
\end{align}
The cost for stopping at time $n$ and deciding in favor of $\Hyp_m$, $D_{m,n}(\stat{n})$, is given by
\begin{align*}
D_{m,n}(\stat{n}) & = \costEst_m p(\Hyp_m\given\stat{n})\int(\param_m -  \est{m,n})^2p(\param_m\given\Hyp_m,\stat{n})\dInt\param_m + \sum_{i=1,i\neq m}^M \costDet_i p(\Hyp_i\given\stat{n})\,.
\end{align*}
Hence, the overall objective becomes
\begin{align}\label{eq:minWeightedSumFinal}
 \min_{\policy\in\policySet}\, \sum_{n=0}^N \int \stopAt\biggl(n\!+\!\!\sum_{m=1}^M \ind{\dec_n=m} D_{m,n} \biggr)p(\obs)\dInt\obs\,.
\end{align}In order to solve \cref{eq:minWeightedSumFinal}, it is first minimized with respect to the decision rule and then with respect to the estimators.
Since $D_{m,n}(\stat{n})$ is non-negative for all $n=0,\ldots,N, m=1,\ldots,M$, and all $\stat{n}\in\stateSpaceStat$, it holds that \citep[Lemma 1]{novikov2009multiple}
\begin{align*}
\int \sum_{m=1}^M   \ind{\dec_n=m} D_{m,n}p(\obs)\dInt\obs \geq \int \min_{m}D_{m,n} p(\obs)\dInt\obs\,,
\end{align*}
where equality holds if and only if
\begin{align}\label{eq:optDecRuleMult}
 \decOpt_n(\stat{n}) \in\Bigl\{m: D_{m,n}(\stat{n}) = \min_{1\leq i \leq M} D_{i,n}(\stat{n})\Bigr\} \,.
\end{align}In general, more than one hypothesis may fulfill \cref{eq:optDecRuleMult} which then calls for randomizing $\decOpt_n$. In this work, there is no need for randomization. The reason is the same as that stated in \citet[Remark 3.1]{reinhard2019bayesian}.
After we derived the optimal decision rule, the optimal estimators have to be found. Since the optimal sequential estimator is independent of the stopping time \cite[Theorem 5.2.1.]{ghosh2011sequential} and each $D_{m,n}$ only depends on one estimator, we can minimize all $D_{m,n}$ with respect to each estimator $\est{m,n}$ separately.
The estimator which minimizes the \ac{MSE}, i.e., the \ac{MMSE} estimator, is \begin{align}\label{eq:optEst}
 \est{m,n}^\star = \E[\paramRV_m\given\Hyp_m,\stat{n}]
\end{align}and the \ac{MMSE} is given by the posterior variance \citep{levy2008principles}. Hence, the cost for stopping and deciding in favor of $\Hyp_m$ at time $n$ becomes
\begin{align*}
 \begin{split}
 D_{m,n}^\star(\stat{n}) & = \costEst_m p(\Hyp_m\given\stat{n}) \Var[\paramRV_m\given\Hyp_m, \stat{n}]  + \sum_{i=1,i\neq m}^M \costDet_i p(\Hyp_i\given\stat{n})\,.
\end{split}
\end{align*}The optimal stopping rule can now be found by solving the following optimization problem
\begin{align}\label{eq:optStopping}
 \min_{\{\stopR_n\}_{n=0}^N} \sum_{n=0}^N \E[\stopAt(n + g(\stat{n}))]\,,
\end{align}
where
\begin{align}\label{eq:defG}
 g(\stat{n}) = \min\{D^\star_{1,n}(\stat{n})\,,\,\ldots\,,\,D^\star_{M,n}(\stat{n})\}
\end{align}
is the cost for stopping at time $n$. The solution of the optimization problem is fixed in the following theorem.
\begin{theorem}\label{theo:optimalStoppingSol} The solution of \cref{eq:optStopping} is characterized by the non-linear Bellman equations
 \begin{align*}
    \rho_n(\stat{n}) & = \min\{g(\stat{n}), d_n(\stat{n})\}\quad n<N \,,\\
    \rho_N(\stat{N}) & = g(\stat{N})\,,
 \end{align*}
 with $g(\stat{n})$ defined in \cref{eq:defG} and the cost for continuing is
 \begin{align}\label{eq:consCont}
  d_n(\stat{n}) = 1 + \int \rho_{n+1}(\transkernel(\xnew))p(\xnew\given\stat{n})\dInt\xnew\,.
 \end{align}
\end{theorem}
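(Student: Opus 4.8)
The plan is to solve \cref{eq:optStopping} by the standard backward-induction (dynamic-programming) argument for truncated optimal stopping problems (see, e.g., \citet{peskir2006optimal}), carried out in the state variable $\stat{n}$. First I would rewrite the objective as an expected total cost over the run-length: since $\stopR_N=1$ forces $\sum_{n=0}^N\stopAt=1$ almost surely, exactly one $\stopAt$ equals one, so that $\sum_{n=0}^N\stopAt\,(n+g(\stat{n}))=\tau+g(\stat{\tau})$ and \cref{eq:optStopping} is the problem of minimizing $\E[\tau+g(\stat{\tau})]$ over all stopping rules truncated at $N$. By \cref{ass:suffStat}, $(\stat{n})_{n\ge 0}$ is a Markov chain under $P$ — its one-step predictive law $p(\xnew\given\stat{n})$ and the update $\stat{n+1}=\transkernel(\xnew)$ depend on the history only through $\stat{n}$ — and the stopping cost $g$ of \cref{eq:defG} is a function of $\stat{n}$ alone; this is exactly the structure that makes a dynamic program in $\stat{n}$ possible.

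Next I would introduce the optimal cost-to-go functions: for $0\le n\le N$ and $\stat{n}\in\stateSpaceStat$, let $V_n(\stat{n})$ be the infimum over all (a priori possibly randomized, history-dependent) stopping rules $\{\stopR_k\}_{k\ge n}$ with $\stopR_N=1$ of $\E[(\tau-n)+g(\stat{\tau})\given\stat{n},\,\tau\ge n]$. The assertion, proved by downward induction, is that $V_n=\rho_n$ with $\rho_n$ as in the statement. The base case $n=N$ is immediate: stopping is forced, so $V_N(\stat{N})=g(\stat{N})=\rho_N(\stat{N})$. For the inductive step, fix $n<N$, assume $V_{n+1}=\rho_{n+1}$, and split on the value of $\stopR_n$. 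If $\stopR_n=1$ the incurred cost is $g(\stat{n})$. If $\stopR_n=0$, one more sample is taken (cost $1$), the chain moves to $\stat{n+1}=\transkernel(\RVidx[n+1])$, and — using the Markov property of \cref{ass:suffStat} to reduce conditioning on the whole history to conditioning on $\stat{n+1}$, and then the induction hypothesis — the remaining expected cost is at least $\E[\rho_{n+1}(\stat{n+1})\given\stat{n}]=\int\rho_{n+1}(\transkernel(\xnew))\,p(\xnew\given\stat{n})\,\dInt\xnew=d_n(\stat{n})-1$, with equality for the greedy continuation. Minimizing over the two actions (and, for randomized rules, over convex combinations, which cannot beat the better pure action) gives $V_n(\stat{n})=\min\{g(\stat{n}),d_n(\stat{n})\}=\rho_n(\stat{n})$ and exhibits the pure minimizer $\stopR_n^\star=\ind{g(\stat{n})\le d_n(\stat{n})}$; hence randomization is unnecessary and ties may be resolved arbitrarily, e.g.\ in favour of stopping. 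Evaluating at $n=0$ identifies $\rho_0(\stat{0})$ as the optimal value of \cref{eq:optStopping}.

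Along the way I would carry out the routine bookkeeping: verifying that each $\rho_n$ is a well-defined, non-negative, $\metricStat$-measurable function so that the integral in \cref{eq:consCont} makes sense. Non-negativity and measurability of $g$ follow from \cref{eq:defG}, and by the tower property together with \cref{ass:finiteSOM} and the finiteness of the coefficients $\costDet_m,\costEst_m$ one gets $\E[g(\stat{n})]<\infty$ for every $n$; an easy induction then gives $\E[\rho_n(\stat{n})]<\infty$, while measurability of $\rho_{n+1}$ is propagated through the kernel, so $d_n$ is measurable as well. The one genuinely delicate point — and the part I would spend the most care on — is the inductive step for an \emph{arbitrary} stopping rule: one must argue cleanly that, conditionally on $\{\tau\ge n,\ \stopR_n=0\}$, the law of $(\stat{k})_{k\ge n+1}$ depends on the history only through $\stat{n+1}$, so that the continuation cost of any admissible rule is bounded below by $V_{n+1}(\stat{n+1})=\rho_{n+1}(\stat{n+1})$. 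This is where \cref{ass:suffStat} enters in full strength and where the reduction of general policies to the greedy one-step-look-ahead rule actually takes place.
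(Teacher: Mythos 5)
Your proposal is correct and follows essentially the same route as the paper, which does not spell out the argument but defers to the standard truncated-optimal-stopping proof by backward induction in the cited references (\citealp[Appendix A]{reinhard2019bayesian}; \citealp[Appendix A]{fauss2015linear}); your reduction of \cref{eq:optStopping} to minimizing $\E[\tau+g(\stat{\tau})]$, the downward induction on the value functions using \cref{ass:suffStat}, and the integrability bookkeeping (handled in the paper via \cref{corr:integrable}) are exactly that argument. Your tie-breaking in favour of stopping also matches the paper's rule $\stopOpt_n=\ind{g(\stat{n})=\rho_n(\stat{n})}$, so no randomization is needed.
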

Since the proof of \cref{theo:optimalStoppingSol} does not differ from the ones in the literature, we refer to, e.g., \citep[Appendix A]{reinhard2019bayesian},\citep[Appendix A]{fauss2015linear}.
With a change in measure, \cref{eq:consCont} becomes
\begin{align*}
 d_n(\stat{n}) = 1 + \int \rho_{n+1}\dInt\updateProbMeasure\,,
\end{align*}where
\begin{align}\label{eq:updateProbMeasure}
 \updateProbMeasure(B) := P\left(\left\{\xnew\in\stateSpaceObs: \transkernel(\xnew) \in B\right\}\given\stat{n}\right)
\end{align}for all elements $B$ of the Borel $\sigma$-algebra on $\stateSpaceStat$. The probability measure $\updateProbMeasure^m$, which will be needed later, is defined analogously, but with $P(\cdot\given\stat{n})$ replaced by $P(\cdot\given\Hyp_m,\stat{n})$.
\begin{corollary}\label{corr:integrable}
 Let $\costDet_m$ and $\costEst_m$ be finite for all $m=1,\ldots,M$, then $\rho_{n+1}$ is $\updateProbMeasure$-integrable for all $\stat{n}$ in $\stateSpaceStat$ and all $0\leq n<N$.
\end{corollary}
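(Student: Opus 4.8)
The plan is to dominate $\rho_{n+1}$ by the stopping cost $g$ and then to show that $g$ has a finite $\updateProbMeasure$-integral for every $\stat{n}\in\stateSpaceStat$. To keep the Bellman recursion of \cref{theo:optimalStoppingSol} well posed at the same time, I would proceed by backward induction: take $\rho_N:=g$, and note that $g\geq0$ because every $D_{m,n}^\star$ is a sum of non-negative terms (the cost coefficients are non-negative, the posterior probabilities and the posterior variance are non-negative), while $d_n\geq1$ whenever $\rho_{n+1}\geq0$; hence, once $\rho_{n+1}$ has been shown $\updateProbMeasure$-integrable, $d_n$ is finite-valued and $\rho_n=\min\{g,d_n\}$ again satisfies $0\leq\rho_n\leq g$. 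In particular $0\leq\rho_{n+1}\leq g$ for all $0\leq n<N$. Since $\updateProbMeasure$ is, by \cref{eq:updateProbMeasure}, the law of $\stat{n+1}=\transkernel(\RVidx[n+1])$ given $\stat{n}$, we have $\int h\,\dInt\updateProbMeasure=\E[h(\stat{n+1})\given\stat{n}]$ for every non-negative measurable $h$, so it remains only to prove $\E[g(\stat{n+1})\given\stat{n}]<\infty$.

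Next I would discard all but one branch of the minimum in \cref{eq:defG}. Fixing an arbitrary index $m$, $g(\stat{n+1})\leq D_{m,n+1}^\star(\stat{n+1})$, and bounding the posterior variance by the posterior second moment and each posterior probability $p(\Hyp_i\given\stat{n+1})$ by $1$ gives $g(\stat{n+1})\leq\costEst_m\,p(\Hyp_m\given\stat{n+1})\,\E[\paramRV_m^2\given\Hyp_m,\stat{n+1}]+\sum_{i\neq m}\costDet_i$, where the additive term is finite because $M$ is finite and every $\costDet_i$ is finite. Writing the first term as $\costEst_m\,\E[\ind{\Hyp_m}\paramRV_m^2\given\stat{n+1}]$, what is left is to control $\E[\E[\ind{\Hyp_m}\paramRV_m^2\given\stat{n+1}] \given \stat{n}]$.

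This last expectation is the core step. One cannot invoke the tower property directly, because $\stat{n}$ is not a function of $\stat{n+1}$; instead I would interpose the conditioning on $(\stat{n},\RVidx[n+1])$. Since $\stat{n+1}=\transkernel(\RVidx[n+1])$ is a sufficient statistic, \cref{ass:suffStat} yields $\E[\ind{\Hyp_m}\paramRV_m^2\given\stat{n+1}]=\E[\ind{\Hyp_m}\paramRV_m^2\given\stat{n},\RVidx[n+1]]$; taking the expectation over $\RVidx[n+1]\sim p(\cdot\given\stat{n})$ (equivalently, the $\updateProbMeasure$-integral) and using the tower property for the nested $\sigma$-algebras $\sigma(\stat{n})\subseteq\sigma(\stat{n},\RVidx[n+1])$ collapses it to $\E[\ind{\Hyp_m}\paramRV_m^2\given\stat{n}]=p(\Hyp_m\given\stat{n})\,\E[\paramRV_m^2\given\Hyp_m,\stat{n}]$. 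Altogether,
\[
 \int\rho_{n+1}\,\dInt\updateProbMeasure\;\leq\;\int g(\stat{n+1})\,\dInt\updateProbMeasure\;\leq\;\costEst_m\,p(\Hyp_m\given\stat{n})\,\E[\paramRV_m^2\given\Hyp_m,\stat{n}]\;+\;\sum_{i\neq m}\costDet_i\,,
\]
which is finite: the cost coefficients are finite by hypothesis and $\E[\paramRV_m^2\given\Hyp_m]<\infty$ by \cref{ass:finiteSOM} (equivalently, the right-hand side exceeds $D_{m,n}^\star(\stat{n})$ by only a finite constant, and $D_{m,n}^\star(\stat{n})$ is finite by construction). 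As $m$ was arbitrary one may also take $\min_m$ on the right, but a single $m$ already suffices.

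I expect the main obstacle to be exactly the bookkeeping in the core step, namely making the ``push the conditional expectation back one time step'' manipulation rigorous despite $\stat{n}$ not being $\sigma(\stat{n+1})$-measurable, which is what forces the detour through $\sigma(\stat{n},\RVidx[n+1])$ and the explicit use of \cref{ass:suffStat} rather than a bare tower-property step. A minor, essentially cosmetic, point is that ``for all $\stat{n}\in\stateSpaceStat$'' should be read with the proviso that the states actually reached from $\stat{0}$ lie in the support of the relevant marginals, where the posteriors are proper and, by \cref{ass:finiteSOM}, have finite second moments.
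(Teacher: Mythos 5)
Your proof is correct, and its skeleton matches the paper's: dominate $\rho_{n+1}$ by $g$, then by a single $D^\star_{m,n+1}$, integrate against $\updateProbMeasure$, and reduce everything to posterior quantities at time $n$ that are finite by \ref{ass:finiteSOM} and the finiteness of the coefficients. Where you diverge is in the treatment of the estimation term: the paper keeps the posterior variance, writes $\int p(\Hyp_m\given\stat{n},\obsScalar{n+1})\Var[\paramRV_m\given\Hyp_m,\stat{n},\obsScalar{n+1}]\,p(\obsScalar{n+1}\given\stat{n})\dInt\obsScalar{n+1}$ as a double integral over $(\param_m,\obsScalar{n+1})$, expands the square and applies Jensen's inequality to the squared posterior mean, arriving at the bound $2p(\Hyp_m\given\stat{n})\Var[\paramRV_m\given\Hyp_m,\stat{n}]$; you instead bound the variance by the posterior second moment and collapse $\E[\ind{\Hyp_m}\paramRV_m^2\given\stat{n},\RVidx[n+1]]$ via the tower property, obtaining $p(\Hyp_m\given\stat{n})\E[\paramRV_m^2\given\Hyp_m,\stat{n}]$ (and you bound the detection posteriors by $1$, where the paper evaluates their conditional expectation exactly as $\sum_{i\neq m}\costDet_i p(\Hyp_i\given\stat{n})$). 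Your route is slightly more elementary, avoiding the Jensen step at the price of a looser constant, which is irrelevant for finiteness; the paper's bound is tighter and stays expressed in posterior variances, which is cosmetically nicer but not needed for the corollary. Your interposition of the $\sigma$-algebra generated by $(\stat{n},\RVidx[n+1])$ together with \ref{ass:suffStat} is exactly the step the paper performs implicitly when it replaces conditioning on $\stat{n+1}$ by conditioning on $(\stat{n},\obsScalar{n+1})$ inside the integral, and your closing caveat about states outside the support of the marginals matches (indeed slightly sharpens) the paper's own level of rigor, which simply asserts that the posterior variance is finite by \ref{ass:finiteSOM}.
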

A proof of \cref{corr:integrable} is given in \cref{sec:proofIntegrable}.
The optimal policy is summarized in the following corollary.
\begin{corollary}\label{corr:optPolicy}
 The optimal policy which solves \cref{eq:unconstrProb} is 
 \begin{align}
  \policyOptC = \policyOptFull\,,
 \end{align}
with $\decOpt_n$ defined in \cref{eq:optDecRuleMult} and $\est{m,n}^\star$ defined in \cref{eq:optEst}. The optimal stopping rule $\stopOpt_n$ is given by
\begin{align*}
 \stopOpt_n(\stat{n}) = \ind{g(\stat{n}) = \rho_n(\stat{n})}\,.
\end{align*}
\end{corollary}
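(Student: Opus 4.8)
The plan is to assemble the reductions already carried out in \cref{sec:optStopping}. Recall that, after substituting the expansions of $\E[\tau]$ and of the weighted error sum in \cref{eq:sumOfD}, the unconstrained objective \cref{eq:unconstrProb} was rewritten as the single minimization in \cref{eq:minWeightedSumFinal}. The first point to establish is that this minimization over $\policy\in\policySet$ decouples across the three types of decision variables: for each $n$, the decision rule $\dec_n$ and the estimators $\est{1,n},\ldots,\est{M,n}$ enter only through the non-negative integrand $\stopAt\bigl(n+\sum_{m}\ind{\dec_n=m}D_{m,n}\bigr)$, and since $\stopAt\ge 0$ pointwise for every stopping rule $\{\stopR_k\}$, choosing $\dec_n$ and $\est{m,n}$ so as to minimize this integrand pointwise in $\stat{n}$ is optimal irrespective of the stopping rule. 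Hence one may minimize first over the decision rules, then over the estimators, and finally over the stopping rules, and the resulting policy attains the global minimum in \cref{eq:minWeightedSumFinal}.

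Carrying this out, the pointwise bound $\sum_m\ind{\dec_n=m}D_{m,n}\ge\min_m D_{m,n}$ holds with equality precisely for the decision rule $\decOpt_n$ of \cref{eq:optDecRuleMult}; that such a selection can be taken measurable, and that no randomization is needed, follows as in \citet[Remark 3.1]{reinhard2019bayesian}. Next, each $D_{m,n}$ depends on only the single estimator $\est{m,n}$, and since the optimal sequential estimator does not depend on the stopping time \citep[Theorem 5.2.1]{ghosh2011sequential}, each $D_{m,n}$ is minimized separately by the conditional-mean estimator $\est{m,n}^\star=\E[\paramRV_m\given\Hyp_m,\stat{n}]$ of \cref{eq:optEst}, which produces $D^\star_{m,n}$ and, via \cref{eq:defG}, the stopping cost $g(\stat{n})=\min_m D^\star_{m,n}(\stat{n})$. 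After these substitutions, \cref{eq:minWeightedSumFinal} becomes the pure optimal stopping problem \cref{eq:optStopping}, whose continuation cost $d_n$ is well-defined by \cref{corr:integrable}.

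It then remains to read off the stopping rule from \cref{theo:optimalStoppingSol}. The value function of \cref{eq:optStopping} is $\rho_n$, and the recursion $\rho_n(\stat{n})=\min\{g(\stat{n}),d_n(\stat{n})\}$ for $n<N$ is the usual stop-versus-continue dichotomy: it is optimal to stop at the first $n$ at which the stopping cost does not exceed the continuation cost, i.e. at which $g(\stat{n})=\rho_n(\stat{n})$, which is exactly $\stopOpt_n(\stat{n})=\ind{g(\stat{n})=\rho_n(\stat{n})}$. Since $\rho_N(\stat{N})=g(\stat{N})$, this rule automatically gives $\stopOpt_N=1$, so the truncation constraint $\stopR_N=1$ is met; collecting $\stopOpt_n$, $\decOpt_n$ and $\est{m,n}^\star$ yields the policy $\policyOptC$ asserted in the corollary. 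I expect the only real subtlety to be the decoupling and measurable-selection step — making rigorous that the pointwise minimizers in $\dec_n$ and $\est{m,n}$ may be chosen as measurable functions of $\stat{n}$ without sacrificing optimality for any stopping rule — but this is precisely what \citet[Remark 3.1]{reinhard2019bayesian} and the cited optimal-stopping references already settle, so no new argument is required.
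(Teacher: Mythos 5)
Your proposal is correct and follows essentially the same route as the paper: the paper derives the corollary in the running text of \cref{sec:optStopping} by minimizing \cref{eq:minWeightedSumFinal} first over the decision rules (pointwise, via the non-negativity of $D_{m,n}$ and \citet[Lemma 1]{novikov2009multiple}), then over the estimators (using \citet[Theorem 5.2.1]{ghosh2011sequential} and the MMSE property), and finally reading the stopping rule off the Bellman recursion of \cref{theo:optimalStoppingSol}, with \cref{corr:integrable} guaranteeing the continuation cost is well defined. Your decoupling and measurable-selection remarks match the paper's own justification (including the appeal to \citet[Remark 3.1]{reinhard2019bayesian} for why no randomization is needed), so no gap remains.
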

For the optimal policy stated in \cref{corr:optPolicy}, the stopping region of the scheme, its complement and its boundary are
\begin{align}\label{eq:regions}
\begin{split}
 \stopRegion & = \{\stat{n} \in \stateSpaceStat: g(\stat{n}) < d_n(\stat{n})\} \\
 \stopRegionBound & = \{\stat{n} \in \stateSpaceStat: g(\stat{n}) = d_n(\stat{n})\} \\
 \stopRegionCompl & = \{\stat{n} \in \stateSpaceStat: g(\stat{n}) > d_n(\stat{n})\}
 \end{split}
\end{align}
for $n<N$.
Moreover, let
\begin{align}
\begin{split}
 \stopRegionDec{n}{m} = \stopRegion \cup \{\stat{n}\in\stateSpaceStat: \dec_n(\stat{n}) = m\} \\
 \stopRegionDec{n}{\bar m} = \stopRegion \cup \{\stat{n}\in\stateSpaceStat: \dec_n(\stat{n}) \neq m\}
 \end{split}
\end{align}
denote the regions in which the procedure stops and accepts/rejects hypothesis $\Hyp_m$.
 \section{Properties of the Cost Function}\label{sec:propCostFct}
In this section, we present, similarly to \citet{reinhard2019bayesian,fauss2015linear}, the fundamental properties of the cost functions obtained in the last section. However, as mentioned before, these properties do not follow directly from their two-hypotheses counterpart.
These properties are used later to get the optimal cost coefficients such that the solution of \cref{eq:constrProblem} also solves \cref{eq:unconstrProb}. In order to simplify the upcoming derivations, it is important to show that the boundary of the stopping region is a P-null set. This is stated in the following lemma.
\begin{lemma}\label{lem:BoundPnull}
 If the posterior probabilities $p(\Hyp_m\given\stat{n})$, $m=1,\ldots,M$, are continuous random variables with respect to $\stat{n}$, it holds that
 \begin{align*}
  \updateProbMeasure(\stopRegionBound) = 0\,,\quad \forall \stat{n}\in\stateSpaceStat,\; 0\leq n < N\,,
 \end{align*}
i.e., the boundary of the stopping region $\stopRegionBound$ is a P-null set.
\end{lemma}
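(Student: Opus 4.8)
The plan is to show that the stopping boundary $\stopRegionBound$ is contained in a set where a certain function of the posterior quantities attains a value on a "lower-dimensional" locus, and then to argue this locus has $\updateProbMeasure$-measure zero. First I would recall that $\stopRegionBound = \{\stat{n}: g(\stat{n}) = d_n(\stat{n})\}$, where $g(\stat{n}) = \min_m D^\star_{m,n}(\stat{n})$ with
\begin{align*}
 D^\star_{m,n}(\stat{n}) = \costEst_m p(\Hyp_m\given\stat{n})\Var[\paramRV_m\given\Hyp_m,\stat{n}] + \sum_{i\neq m}\costDet_i p(\Hyp_i\given\stat{n})\,,
\end{align*}
and $d_n(\stat{n}) = 1 + \int \rho_{n+1}\dInt\updateProbMeasure$. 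The key structural observation is that both $g$ and $d_n$ are functions of the statistic only through the posterior tuple $\postVarTupleFull$ (the posterior probabilities and the posterior variances), which is the natural low-dimensional sufficient coordinate used elsewhere in the paper. So I would work in the state space of that tuple, push $\updateProbMeasure$ forward to a measure on it, and study the set $\{g = d_n\}$ there.

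Next I would argue that, conditioned on the hypothesis being $\Hyp_m$ (or averaging over $m$ with the posterior weights), the equation $g(\stat{n}) = d_n(\stat{n})$ cuts out a set of codimension at least one in the posterior-probability simplex. Concretely, restrict attention to the region where the minimizing index in $g$ is a fixed $m$; there $g$ equals the affine-plus-variance expression $D^\star_{m,n}$, while $d_n$ does not depend on $\costDet_m$ (inspecting $D^\star_{m,n}$, the coefficient $\costDet_m$ is absent). Hence on this region the level set $\{D^\star_{m,n} = d_n\}$ is the zero set of a non-constant continuous function of the posterior probabilities — non-constant because $D^\star_{m,n}$ is strictly monotone in $p(\Hyp_m\given\stat{n})$ for fixed values of the other coordinates (when $\costEst_m$ or some $\costDet_i$ is positive), whereas $d_n$ is bounded. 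A non-constant continuous function on a connected Euclidean region has a zero set with empty interior, and with the continuity hypothesis on the posteriors this zero set is mapped by the distribution of $\stat{n}$ to a null set. Summing over the finitely many index regions $m = 1,\ldots,M$ then yields $\updateProbMeasure(\stopRegionBound) = 0$.

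The main obstacle I anticipate is making the "codimension one / null set" step rigorous without assuming more regularity than the lemma grants: continuity of $p(\Hyp_m\given\stat{n})$ in $\stat{n}$ gives a continuous random variable but not, a priori, absolute continuity of the joint law of the tuple, so one cannot simply invoke "Lebesgue measure of a hypersurface is zero." The clean way around this is to avoid geometry altogether and instead use the monotonicity directly: fix everything except one scalar posterior coordinate, observe that $\stat{n}\mapsto D^\star_{m,n}(\stat{n}) - d_n(\stat{n})$ is strictly monotone in that coordinate, so $\{D^\star_{m,n} = d_n\}$ meets each fiber in at most one point, and since that coordinate has a continuous (atomless) distribution by hypothesis, a Fubini/disintegration argument gives the fiberwise-null set a null total measure. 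I would also need to handle the degenerate sub-case where $\costEst_m = 0$ and all relevant $\costDet_i = 0$ so that $D^\star_{m,n}$ is constant — but then that region contributes nothing new, or can be absorbed, since there $g$ is trivially below $d_n \geq 1$ for $n < N$ once the cost coefficients are not all zero; this edge case should be dispatched in a sentence. Finally, I would remark that the same conclusion holds for $\updateProbMeasure^m$ in place of $\updateProbMeasure$, since the argument only used the continuity of the posteriors, which is a property of $\stat{n}$ and not of the conditioning measure; this is what later sections will actually need.
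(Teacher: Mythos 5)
There is a genuine gap, and it sits exactly where the real difficulty of the lemma lies. Your argument reduces to the claim that, with all other posterior coordinates held fixed, $\stat{n}\mapsto D^\star_{m,n}(\stat{n})-d_n(\stat{n})$ is strictly monotone in $p(\Hyp_m\given\stat{n})$, treating $d_n$ as essentially inert (``bounded''). But $d_n(\stat{n}) = 1+\int\rho_{n+1}(\transkernel(\xnew))\,p(\xnew\given\stat{n})\dInt\xnew$ depends on the posterior probabilities twice over: through the predictive density $p(\xnew\given\stat{n})=\sum_i p(\xnew\given\Hyp_i,\stat{n})\,p(\Hyp_i\given\stat{n})$ and through the updated posteriors entering $\rho_{n+1}$. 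So strict monotonicity of the difference along a fiber is precisely what must be proved, and nothing in the proposal supplies it; the Fubini/disintegration step then has nothing to rest on. Two side claims are also off: $d_n$ \emph{does} depend on $\costDet_m$ (inside $\rho_{n+1}$, every $D^\star_{i,n+1}$ with $i\neq m$ carries a $\costDet_m\,p(\Hyp_m\given\cdot)$ term), and in any case dependence on the cost coefficients is irrelevant to whether a level set in the state space is null; moreover, a nonconstant continuous function can have a zero set with nonempty interior (e.g. $\max\{x,0\}$), so that intermediate remark cannot substitute for the monotonicity. Finally, the posterior probabilities are constrained to the simplex and are functions of $\stat{n}$, so ``vary one coordinate with the others fixed'' is not an operation the hypothesis of the lemma makes available.

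The missing idea, which is how the paper proceeds, is to exploit degree-one positive homogeneity in the \emph{whole} posterior tuple rather than monotonicity in a single coordinate. Each $D^\star_{m,n}$ is linear in $\bigl(p(\Hyp_1\given\stat{n}),\ldots,p(\Hyp_M\given\stat{n})\bigr)$, and the Bayes update of the posterior tuple is linear in it; by two sandwich lemmas proved by backward induction, both the stopping cost $g$ and the integral part of $d_n$ therefore scale (up to $\min$/$\max$ factors of the scaling vector $\mathbf{a}$) when the posterior tuple is scaled, while the additive constant $1$ in $d_n$ does not. If $g=d_n$ held on a nondegenerate interval of posterior values, scaling along that interval would force a strict inequality between the two costs, a contradiction; the continuity of the posteriors then converts the absence of such intervals into $\updateProbMeasure(\stopRegionBound)=0$. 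Without this homogeneity-versus-offset mechanism (or some replacement for it), your ``at most one crossing per fiber'' step is unsupported, so the proof as proposed does not go through.
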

The proof of \cref{lem:BoundPnull} is laid down in the supplement \cref{sec:proofBoundPnull}. \cref{lem:BoundPnull} implies that the cost minimizing stopping rule is unambiguous so that there is no need for randomization. Before the main properties of the cost functions are presented, the following short hand notations are introduced
\begin{align*}
 z_n^m & := \frac{p(\stat{n}\given\Hyp_m)}{p(\stat{n})}\,,\\
 \{\transkernel \in \stopRegionCompl[n+1]\} & := \{\xnew \in\stateSpaceObs: \transkernel(\xnew)\in\stopRegionCompl[n+1]\}\,.
\end{align*}
\begin{lemma}\label{lem:derivatives}
Let $\rho_{n,\costDet_m}^\prime$ and $\rho_{n,\costEst_m}^\prime$ denote the derivatives of $\rho_n$ with respect to $\costDet_m$ and $\costEst_m$ for $m=1,\ldots,M$, respectively. For $n<N$, it holds that
 \begin{align*}
   \rho_{n,\costDet_m}^\prime(\stat{n})&  =  \ind{\stopRegionDec{n}{\bar m}} p(\Hyp_m\given\stat{n})  + \ind{\stopRegionCompl}\biggl( p(\Hyp_m)z_{n}^m\updateProbMeasure^m\bigl(\stopRegionDec{n+1}{\bar m}\bigr) + \int_{\{\transkernel\in\stopRegionCompl[n+1]\}} \rho^\prime_{n+1,\costDet_m}\dInt\updateProbMeasure      \biggr)
 \end{align*}
 and
 \begin{align*}
   \rho_{n,\costEst_m}^\prime(\stat{n}) & = \ind{\stopRegionDec{n}{m}} p(\Hyp_m) z_n^m \Var[\paramRV_m\given\Hyp_m, \stat{n}] + \ind{\stopRegionCompl}r_n^m
 \end{align*}
 with $r_n^m$ being recursively defined via
 \begin{align*}
  r^m_n & = p(\Hyp_m) z_n^m \!\!\!\!\!\!\int\displaylimits_{\{\transkernel\in\stopRegionDec{n+1}{m}\}}\!\!\!\!\!\Var[\paramRV_m\given\Hyp_m,\transkernel(\xnew)]p(\xnew\given\stat{n},\Hyp_m)\dInt\xnew + \int_{\{\transkernel\in\stopRegionCompl[n+1]\}} \rho^\prime_{n+1,\costEst_m} \dInt\updateProbMeasure\,.
 \end{align*}
For $n=N$, it further holds that
\begin{align*}
 \rho_{N,\costDet_m}^\prime(\stat{N}) & = \ind{\stopRegionDec{N}{\bar m}}p(\Hyp_m)z_N^m \\
   \rho_{N,\costEst_m}^\prime(\stat{N}) & = \ind{\stopRegionDec{N}{m}} p(\Hyp_m) z_N^m \Var[\paramRV_m\given\Hyp_m, \stat{N}]\,.      
\end{align*}
\end{lemma}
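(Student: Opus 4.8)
The plan is to establish all four identities by backward induction on $n$, reading the case $n=N$ directly off $\rho_N=g$ and pushing it down through the Bellman recursion of \cref{theo:optimalStoppingSol}; throughout, \cref{lem:BoundPnull} guarantees that the exceptional sets which surface — the stopping boundaries $\stopRegionBound[k]$ and the ties in the minimisation defining $g$ — never carry $\updateProbMeasure$-mass, so the identities need only be argued off a $\updateProbMeasure$-null set.

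Both the base case and the part of the inductive step dealing with $\stopRegion$ reduce to differentiating the pointwise minimum $g(\stat{n})=\min_{1\le i\le M}D_{i,n}^\star(\stat{n})$ in the coefficients. Since $D_{i,n}^\star$ contains $\costDet_m$ only through the summand $\costDet_m\,p(\Hyp_m\given\stat{n})$, which appears exactly when $i\neq m$, and contains $\costEst_m$ only through the term $\costEst_m\,p(\Hyp_m\given\stat{n})\Var[\paramRV_m\given\Hyp_m,\stat{n}]$ of $D_{m,n}^\star$, the envelope property of the minimum gives, wherever the minimiser $\dec_n$ is unique, $\partial g/\partial\costDet_m=p(\Hyp_m\given\stat{n})\,\ind{\dec_n\neq m}$ and $\partial g/\partial\costEst_m=p(\Hyp_m\given\stat{n})\Var[\paramRV_m\given\Hyp_m,\stat{n}]\,\ind{\dec_n=m}$. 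Rewriting $p(\Hyp_m\given\stat{n})=p(\Hyp_m)z_n^m$ by Bayes' rule and using that the scheme always stops at $n=N$ yields the $n=N$ formulas; for $n<N$, the strict inequality $g<d_n$ on $\stopRegion$ persists under small perturbations of the coefficient, so $\rho_n=g$ there and we obtain the indicator-on-$\stopRegion$ terms of the two displayed identities, while on $\stopRegionCompl$ the strict inequality $g>d_n$ persists and $\rho_n=d_n$.

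On $\stopRegionCompl$ we thus have $\rho_n=d_n=1+\int\rho_{n+1}\,\dInt\updateProbMeasure$ and, interchanging differentiation and integration — permissible because, by the induction hypothesis, $\rho_{n+1}$ is differentiable in the coefficients at $\updateProbMeasure$-a.e.\ $\stat{n+1}$ (the kinks sitting on $\stopRegionBound[n+1]$, a $\updateProbMeasure$-null set by \cref{lem:BoundPnull}) with a $\updateProbMeasure$-integrable dominating derivative supplied by \cref{corr:integrable} and \cref{ass:finiteSOM} — one gets $\rho_{n,\costDet_m}^\prime=\int\rho_{n+1,\costDet_m}^\prime\,\dInt\updateProbMeasure$ and likewise for $\costEst_m$. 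Splitting the integral over $\{\transkernel\in\stopRegion[n+1]\}$, $\{\transkernel\in\stopRegionBound[n+1]\}$ and $\{\transkernel\in\stopRegionCompl[n+1]\}$, the middle piece vanishes by \cref{lem:BoundPnull}, the $\stopRegionCompl[n+1]$-piece is already the recursive term in the claim, and on the $\stopRegion[n+1]$-piece the induction hypothesis reduces $\rho_{n+1,\costDet_m}^\prime$ to $p(\Hyp_m\given\transkernel(\xnew))\,\ind{\transkernel\in\stopRegionDec{n+1}{\bar m}}$ and $\rho_{n+1,\costEst_m}^\prime$ to $p(\Hyp_m\given\transkernel(\xnew))\Var[\paramRV_m\given\Hyp_m,\transkernel(\xnew)]\,\ind{\transkernel\in\stopRegionDec{n+1}{m}}$. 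Invoking the pushforward definition \cref{eq:updateProbMeasure} of $\updateProbMeasure$ together with the sufficiency identity $p(\Hyp_m\given\stat{n+1})\,p(\stat{n+1}\given\stat{n})=p(\Hyp_m\given\stat{n})\,p(\stat{n+1}\given\Hyp_m,\stat{n})$ implied by \cref{ass:suffStat}, the $\stopRegion[n+1]$-piece becomes $p(\Hyp_m)z_n^m\,\updateProbMeasure^m(\stopRegionDec{n+1}{\bar m})$ in the $\costDet_m$ case and $p(\Hyp_m)z_n^m\int_{\{\transkernel\in\stopRegionDec{n+1}{m}\}}\Var[\paramRV_m\given\Hyp_m,\transkernel(\xnew)]\,p(\xnew\given\stat{n},\Hyp_m)\,\dInt\xnew$ in the $\costEst_m$ case — exactly the remaining term of $\rho_{n,\costDet_m}^\prime$ and the first summand of $r_n^m$.

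The step I expect to be the main obstacle is precisely this interchange of differentiation and integration in the continuation region: one must show that perturbing a cost coefficient moves $\rho_{n+1}$ differentiably at $\updateProbMeasure$-almost every $\stat{n+1}$ even though the downstream stopping regions $\stopRegion[k]$, $k>n$, themselves depend on that coefficient, and that the resulting difference quotients are dominated uniformly in the perturbation. This is where \cref{lem:BoundPnull} (applied at every level) and the uniform $N$-dependent bound on $\rho_{n,\costDet_m}^\prime$ together with the $\updateProbMeasure$-integrable bound on $\rho_{n,\costEst_m}^\prime$ — both obtained inductively from \cref{ass:finiteSOM} and \cref{corr:integrable} — enter in an essential way; the rest is bookkeeping with the region decomposition and Bayes' rule.
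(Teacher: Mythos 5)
Your proposal is correct and follows essentially the same route as the paper's proof: a case split over $\stopRegion$ and $\stopRegionCompl$ (away from the $P$-null boundary), differentiation of $g=\min_i D^\star_{i,n}$ through the branch selected by the decision rule, interchange of differentiation and integration on the continuation region followed by splitting the inner integral over $\stopRegion[n+1]$, $\stopRegionBound[n+1]$ and $\stopRegionCompl[n+1]$ with a Bayes/change-of-measure step yielding the $\updateProbMeasure^m$-terms, and a backward-inductive construction of a dominating bound on the derivatives (the paper invokes the differentiation lemma of Bauer with exactly the inductive domination you sketch, anchored at $n=N$). The only difference is organizational — you fold the formula derivation and its justification into one induction, while the paper derives the formulas first and then verifies the interchange separately — which does not change the substance.
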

The connection of the derivatives with respect to $\costDet_m$ do not follow from \citet{reinhard2019bayesian} straightforwardly. Therefore, the proof is given in \cref{app:proofTheoDerivatives}. For the derivative with respect to $\costEst_m$, we refer to \citet[Appendix D]{reinhard2019bayesian}.
\cref{lem:derivatives} is an intermediate result, which is required to prove the more important result stated in \cref{theo:derivativesPerformanceMeasures}.
Next, we provide a connection between the derivatives of the cost functions stated in the previous theorem and the performance measures stated in \cref{subsec:perfMeasures}. This connection forms the basis to obtain the optimal cost coefficients, as presented in \cref{sec:optCostCoeff}.
\begin{theorem}\label{theo:derivativesPerformanceMeasures}
Let $\rho_{n,\costDet_m}^\prime$ and $\rho_{n,\costEst_m}^\prime$ be as defined in \cref{lem:derivatives}. Then, using the optimal policy given in \cref{corr:optPolicy}, it holds that 
 \begin{align*}
  \rho_{n,\costDet_m}^\prime(\stat{n}) & = p(\Hyp_m)z_n^m \errorDet{n}{m}(\stat{n})\,,\;m=1,\ldots,M\,,\\
  \rho_{n,\costEst_m}^\prime(\stat{n}) & = p(\Hyp_m)z_n^m \errorEst{n}{m}(\stat{n})\,,\;m=1,\ldots,M\,,
 \end{align*}
 and in particular
 \begin{align*}
  \rho_{0,\costDet_m}^\prime(\stat{0}) & = p(\Hyp_m)\errorDet{0}{m}(\stat{0})\,,\;m=1,\ldots,M\,,\\
  \rho_{0,\costEst_m}^\prime(\stat{0}) & = p(\Hyp_m)\errorEst{0}{m}(\stat{0})\,,\;m=1,\ldots,M\,.
 \end{align*}
\end{theorem}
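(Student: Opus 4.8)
The plan is to prove \cref{theo:derivativesPerformanceMeasures} by induction on $n$, running downward from $n=N$ to $n=0$, and to verify that the recursions for $\rho_{n,\costDet_m}^\prime$ and $\rho_{n,\costEst_m}^\prime$ established in \cref{lem:derivatives} are satisfied by the ansatz $\rho_{n,\costDet_m}^\prime(\stat{n}) = p(\Hyp_m)z_n^m\errorDet{n}{m}(\stat{n})$ and $\rho_{n,\costEst_m}^\prime(\stat{n}) = p(\Hyp_m)z_n^m\errorEst{n}{m}(\stat{n})$. The key observation that makes this work is that the recursive definitions of the performance measures in \cref{eq:perfMeasuresRecFirst}--\cref{eq:perfMeasuresRecLast}, when evaluated under the optimal policy of \cref{corr:optPolicy}, have exactly the same structure as the derivative recursions in \cref{lem:derivatives}, once one accounts for the change of measure between $P(\cdot\given\stat{n})$ and $P(\cdot\given\Hyp_m,\stat{n})$ via the likelihood ratio $z_n^m = p(\stat{n}\given\Hyp_m)/p(\stat{n})$.

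First I would treat the base case $n=N$. Under the optimal policy $\stopR_N^\star=1$, so $\errorDet{N}{m}(\stat{N}) = \ind{\dec_N^\star\neq m}$ and $\errorEst{N}{m}(\stat{N}) = \ind{\dec_N^\star = m}\Var[\paramRV_m\given\Hyp_m,\stat{N}]$ (using that $\est{m,N}^\star$ is the \ac{MMSE} estimator, so the conditional expected squared error equals the posterior variance). Since $\{\dec_N^\star\neq m\}$ is precisely the region $\stopRegionDec{N}{\bar m}$ and $\{\dec_N^\star = m\}$ is $\stopRegionDec{N}{m}$ at time $N$ (where the scheme always stops), multiplying by $p(\Hyp_m)z_N^m$ reproduces exactly the terminal expressions for $\rho_{N,\costDet_m}^\prime$ and $\rho_{N,\costEst_m}^\prime$ in \cref{lem:derivatives}. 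This establishes the claim at $n=N$.

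For the inductive step, assume the claim holds at time $n+1$, i.e. $\rho_{n+1,\costDet_m}^\prime(\stat{n+1}) = p(\Hyp_{m})z_{n+1}^{m}\errorDet{n+1}{m}(\stat{n+1})$ and similarly for the estimation derivative. I would substitute this into the recursions of \cref{lem:derivatives}. For the detection derivative, the non-stopping branch contains the term $\int_{\{\transkernel\in\stopRegionCompl[n+1]\}}\rho_{n+1,\costDet_m}^\prime\dInt\updateProbMeasure$; the crucial computation is to rewrite the integral against $\updateProbMeasure$ (built from $P(\cdot\given\stat{n})$) combined with the $z_{n+1}^m$ factor as an integral against $\updateProbMeasure^m$ (built from $P(\cdot\given\Hyp_m,\stat{n})$) times $z_n^m$ — this is the Bayes-rule identity $p(\stat{n+1}\given\Hyp_m)\,p(\stat{n}) = p(\stat{n+1})\,p(\stat{n}\given\Hyp_m)$ applied along the transition kernel, i.e. $z_{n+1}^m\,\dInt\updateProbMeasure = z_n^m\,\dInt\updateProbMeasure^m$ on the appropriate set. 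Combining the stopping term $p(\Hyp_m)z_n^m\updateProbMeasure^m(\stopRegionDec{n+1}{\bar m})$ with the continuation integral, one recognizes $\E[\errorDet{n+1}{m}(\stat{n+1})\given\Hyp_m,\stat{n}]$ over the full space split into $\stopRegionDec{n+1}{\bar m}$ (where $\errorDet{n+1}{m}=1$) and $\stopRegionCompl[n+1]$ (where the inductive hypothesis applies), while on $\stopRegion[n+1]$ the scheme stops and accepts $m$ so $\errorDet{n+1}{m}=0$ contributes nothing — matching the recursion \cref{eq:perfMeasuresRecFirst} scaled by $p(\Hyp_m)z_n^m$ on the set $\stopRegionCompl$. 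On the stopping set $\stopRegion$ (i.e. inside $\ind{\stopRegionDec{n}{\bar m}}$), the term $p(\Hyp_m\given\stat{n})$ equals $p(\Hyp_m)z_n^m$, which is exactly $p(\Hyp_m)z_n^m\,\ind{\dec_n^\star\neq m}$. The estimation derivative is handled identically, with the posterior-variance integrand playing the role of $\errorEst{n+1}{m}$'s stopping contribution. Finally, putting $n=0$ and using $z_0^m=1$ (since $\stat{0}$ is deterministic, $p(\stat{0}\given\Hyp_m)=p(\stat{0})$) gives the displayed special case.

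The main obstacle I anticipate is the change-of-measure bookkeeping: carefully justifying that the factor $z_{n+1}^m$ inside the $\updateProbMeasure$-integral converts the measure to $\updateProbMeasure^m$ up to the factor $z_n^m$, uniformly over the relevant Borel sets, and that the boundary set $\stopRegionBound[n+1]$ can be neglected — this is where \cref{lem:BoundPnull} and the $\updateProbMeasure$-integrability from \cref{corr:integrable} are needed to ensure all integrals are well-defined and the partition of $\stateSpaceStat$ into $\stopRegionDec{n+1}{m}$, $\stopRegionDec{n+1}{\bar m}$ (restricted to $\stopRegion[n+1]$) and $\stopRegionCompl[n+1]$ is valid up to a null set. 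The rest is a structural matching of two recursions that, by design, have the same shape.
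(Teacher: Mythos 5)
Your proposal is correct and follows essentially the same route as the paper: it verifies that $p(\Hyp_m)z_n^m\errorDet{n}{m}$ (and its estimation counterpart) satisfies the derivative recursions of \cref{lem:derivatives} by matching them against the piecewise/recursive form of the error measures under the optimal policy, using the identity $z_{n+1}^m = z_n^m\,p(\xnew\given\Hyp_m,\stat{n})/p(\xnew\given\stat{n})$ to pass between $\updateProbMeasure$ and $\updateProbMeasure^m$. The only difference is presentational — you phrase it as an explicit backward induction with base case $n=N$, while the paper verifies the same recursion case-by-case on $\stopRegionDec{n}{m}$, $\stopRegionDec{n}{\bar m}$ and $\stopRegionCompl$ — so no substantive gap.
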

As the connection for the derivatives with respect to $\costDet_m$ and the detection errors do not follow from its two-hypotheses counterpart, the proof is outlined in  \cref{app:proofDerivativesPerformanceMeasures}. For the derivatives with respect to $\costEst_m$, we refer to \citet[Appendix E]{reinhard2019bayesian}.
 \section{Choice of the Optimal Cost Coefficients}\label{sec:optCostCoeff}
This section extends Section 5 in \citet{reinhard2019bayesian} to multiple hypotheses.
Proofs are given for results that do not follow directly from \citet{reinhard2019bayesian}.

Often, the question how to choose the coefficients of a cost function stays untouched and the choice is left to the designer \citep{Yilmaz2016Sequential, novikov2009multiple}. However, in this work, our aim is to design a sequential scheme which is of minimum average run-length and which fulfills predefined constraints on the detection and estimation errors. Since all three performance measures, average run-length, error probabilities and \ac{MSE}, are of different numerical range, it is rather impossible to choose the cost coefficients by hand.
In order to automatically select the correct cost coefficients such that a predefined performance is achieved, the results stated in \cref{sec:propCostFct} are exploited.

For the sake of a compact notation, let $\lambda=(\lambda_1,\ldots,\lambda_M)$ and $\mu=(\mu_1,\ldots,\mu_M)$. To obtain the set of optimal cost coefficients, we consider the following maximization problem, which is in fact the Lagrangian dual of \cref{eq:constrProblem},
\begin{align} \label{eq:dualProblem}
 \max_{\costDet\geq0,\costEst\geq0}\, & L_{\detConstr[],\estConstr[]}(\costDet,\costEst)\,,
\end{align}
where $\costDet\geq0$, $\costEst\geq0$ have to be read element-wise and 
\begin{align}
 L_{\detConstr[],\estConstr[]}(\costDet,\costEst) = \rho_0(\stat{0}) - \sum_{m=1}^M p(\Hyp_m)(\costDet_m\detConstr + \costEst_m\estConstr)\,.
\end{align}
As it is not trivial to see that strong duality holds and therefore the solutions of \cref{eq:constrProblem} and \cref{eq:dualProblem} coincide, it is fixed in the following theorem.
\begin{theorem}\label{theo:optCostCoeff}
Let $\policyOptErr$ be the solution of \cref{eq:constrProblem}, let $\costDetOpt_{\detConstr[],\estConstr[]}$ and $\costEstOpt_{\detConstr[],\estConstr[]}$  be solutions of \cref{eq:dualProblem} and let $\policyOptCerr$ be the policy parametrized by $\costDetOpt_{\detConstr[],\estConstr[]}$, $\costEstOpt_{\detConstr[],\estConstr[]}$. Then, it holds that
\begin{align*}
 \policyOptErr & = \policyOptCerr\,, \\
 L_{\detConstr[],\estConstr[]}(\costDetOpt_{\detConstr[],\estConstr[]}, \costEstOpt_{\detConstr[],\estConstr[]}) & = \E\biggl[\tau\bbgiven\policy=\policyOptCerr\biggr]\,.
\end{align*}
That is, a solution of \cref{eq:constrProblem} also solves \cref{eq:dualProblem}. Moreover, the optimal objective of \cref{eq:dualProblem} is the expected run-length.
\end{theorem}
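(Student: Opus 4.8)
The plan is to establish weak duality first, then construct a primal-dual pair witnessing zero duality gap, and finally invoke complementary slackness to identify the policies. First I would recall that $\rho_0(\stat{0})$ is, by \cref{theo:optimalStoppingSol} and \cref{corr:optPolicy}, exactly the optimal value of the unconstrained problem \cref{eq:unconstrProb} for the given coefficients $(\costDet,\costEst)$; that is,
\begin{align*}
 \rho_0(\stat{0}) = \min_{\policy\in\policySet}\Bigl\{\E[\tau\given\policy] + \sum_{m=1}^M p(\Hyp_m)\bigl(\costDet_m\errorDet{}{m}(\policy) + \costEst_m\errorEst{}{m}(\policy)\bigr)\Bigr\}\,,
\end{align*}
where I write $\errorDet{}{m}(\policy)$, $\errorEst{}{m}(\policy)$ to emphasize the dependence of the error metrics on the policy. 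Substituting this into the definition of $L_{\detConstr[],\estConstr[]}$ shows that for any feasible $\policy$ for \cref{eq:constrProblem} and any $\costDet\geq0$, $\costEst\geq0$,
\begin{align*}
 L_{\detConstr[],\estConstr[]}(\costDet,\costEst) \leq \E[\tau\given\policy] + \sum_{m=1}^M p(\Hyp_m)\bigl(\costDet_m(\errorDet{}{m}(\policy)-\detConstr) + \costEst_m(\errorEst{}{m}(\policy)-\estConstr)\bigr) \leq \E[\tau\given\policy]\,,
\end{align*}
since each bracketed term is nonpositive by feasibility. Taking the infimum over feasible $\policy$ on the right and the supremum over $(\costDet,\costEst)$ on the left gives weak duality: the optimal value of \cref{eq:dualProblem} is at most that of \cref{eq:constrProblem}.

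Next I would show the gap is closed at the dual optimizer. Let $(\costDetOpt_{\detConstr[],\estConstr[]},\costEstOpt_{\detConstr[],\estConstr[]})$ solve \cref{eq:dualProblem} and let $\policyOptCerr$ be the corresponding optimal unconstrained policy from \cref{corr:optPolicy}. The key analytic input is \cref{theo:derivativesPerformanceMeasures}, which at $n=0$ gives $\rho_{0,\costDet_m}^\prime(\stat{0}) = p(\Hyp_m)\errorDet{}{m}$ and $\rho_{0,\costEst_m}^\prime(\stat{0}) = p(\Hyp_m)\errorEst{}{m}$, where these errors are those of the policy parametrized by the current coefficients. Hence $L_{\detConstr[],\estConstr[]}$ is differentiable in $(\costDet,\costEst)$ with
\begin{align*}
 \frac{\partial L_{\detConstr[],\estConstr[]}}{\partial\costDet_m} = p(\Hyp_m)\bigl(\errorDet{}{m}-\detConstr\bigr)\,,\qquad
 \frac{\partial L_{\detConstr[],\estConstr[]}}{\partial\costEst_m} = p(\Hyp_m)\bigl(\errorEst{}{m}-\estConstr\bigr)\,.
\end{align*}
The first-order optimality (KKT) conditions for the box-constrained maximization \cref{eq:dualProblem} then read: for each $m$, either $\costDetOpt_{m}>0$ and $\errorDet{}{m}=\detConstr$, or $\costDetOpt_{m}=0$ and $\errorDet{}{m}\leq\detConstr$ (and likewise for $\costEst_m$, $\errorEst{}{m}$, $\estConstr$). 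In either case the product $\costDetOpt_{m}p(\Hyp_m)(\errorDet{}{m}-\detConstr)$ vanishes, and similarly for estimation; this is complementary slackness. Consequently $\policyOptCerr$ is feasible for \cref{eq:constrProblem}, and plugging complementary slackness into the chain of inequalities above collapses them to equalities, yielding $L_{\detConstr[],\estConstr[]}(\costDetOpt_{\detConstr[],\estConstr[]},\costEstOpt_{\detConstr[],\estConstr[]}) = \E[\tau\given\policy=\policyOptCerr]$. Combined with weak duality and feasibility of $\policyOptCerr$, this forces $\policyOptCerr$ to be a minimizer of \cref{eq:constrProblem}, and since the primal solution is essentially unique (by \cref{lem:BoundPnull}, the stopping rule is unambiguous, and the decision rule and estimators are pinned down by \cref{eq:optDecRuleMult} and \cref{eq:optEst}), we get $\policyOptErr = \policyOptCerr$ and the stated value identity.

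I expect the main obstacle to be the differentiability and KKT step: one must justify that $\rho_0(\stat{0})$ is genuinely differentiable in the coefficients (not merely that \cref{theo:derivativesPerformanceMeasures} produces a formal derivative along the optimal policy), which requires an envelope-type argument showing the minimizing policy can be held fixed to first order, together with the integrability guaranteed by \cref{corr:integrable} to interchange differentiation and the expectations/integrals defining $\rho_0$. A secondary subtlety is handling the possibility that \cref{eq:dualProblem} has its supremum attained only in a limiting sense or that an optimal coefficient is $+\infty$ — here one would argue that under the standing feasibility hypothesis (the constraints are satisfiable with $N$ samples, as assumed in \cref{sec:optProblem}) a finite maximizer exists, so \cref{corr:integrable} applies and the construction goes through. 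The rest is the routine weak-duality bookkeeping sketched above.
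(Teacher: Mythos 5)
Your proposal is correct and follows essentially the same route as the paper: the gradient of $L_{\detConstr[],\estConstr[]}$ is identified with $p(\Hyp_m)(\errorDet{}{m}-\detConstr)$ and $p(\Hyp_m)(\errorEst{}{m}-\estConstr)$ via \cref{theo:derivativesPerformanceMeasures}, and complementary slackness at the dual optimum delivers both feasibility of $\policyOptCerr$ and the collapse $L_{\detConstr[],\estConstr[]}(\costDetOpt_{\detConstr[],\estConstr[]},\costEstOpt_{\detConstr[],\estConstr[]})=\E[\tau]$, just as in the paper (your explicit weak-duality chain is only a repackaging of the paper's ``minimum run-length by definition'' step). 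The one point you defer --- finiteness of the dual maximizer --- is settled in the paper exactly as you anticipate: since $\partial L_{\detConstr[],\estConstr[]}/\partial\costDet_m \to -p(\Hyp_m)\detConstr<0$ as $\costDet_m\to\infty$, concavity of $L_{\detConstr[],\estConstr[]}$ together with the sign of the gradient at $\costDet_m=0$ guarantees a finite nonnegative optimizer.
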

A proof of \cref{theo:optCostCoeff} is outlined in \cref{sec:proofOptCoeff}.

Hence, by using \cref{theo:optCostCoeff} and \cref{eq:dualProblem} the original problem given in \cref{eq:constrProblem} is equivalent to
\begin{align} \label{eq:lagrFinal}
\max_{\costDet\geq0,\costEst\geq0} & \,\left\{ \rho_0(\stat{0}) \!-\!\sum_{m=1}^M p(\Hyp_m)(\costDet_m\detConstr + \costEst_m\estConstr)\right\}\,, \\
 \text{s.t.}\; & \rho_n(\stat{n}) = \min\{g(\stat{n}), d_n(\stat{n})\}\,,\quad n<N\,, \nonumber\\
		  & \rho_N(\stat{N}) = g(\stat{N})\,.  \nonumber
\end{align}
In what follows, we present two approaches to solve \cref{eq:lagrFinal}. The first one uses linear programming to obtain the optimal cost coefficients and the cost functions, whereas the second one uses a projected gradient ascent.

\subsection{Linear Programming}
The first approach solves \cref{eq:lagrFinal} by linear programming. To this end, we proceed similarly to \citet{fauss2015linear, reinhard2019bayesian} and relax the equality constraints in \cref{eq:lagrFinal} to multiple inequality constraints and add the cost functions to the set of free variables, i.e.,
\begin{align}\label{eq:LP}
\max_{\substack{\costDet\geq0,\costEst\geq0\\\rho_n\in\mathcal{L}}}
 \, & \biggl\{\rho_0(\stat{0}) - \sum_{m=1}^Mp(\Hyp_m)(\costDet_m\detConstr + \costEst_m\estConstr)\biggr\}\,,\\
 \text{s.t.} \quad
 &\rho_n \leq D^\star_{m,n}\;,\; m=1,\ldots,M,\; n=0,\ldots,N\,, \nonumber\\
 &\rho_n \leq 1+\int\rho_{n+1}\dInt\updateProbMeasure\,, \; n=0,\ldots,N-1\,,\nonumber
\end{align}
where $\mathcal{L}$ is the set of all $\updateProbMeasure$-integrable functions on $\stateSpaceStat$.
\begin{theorem}
 Problem \cref{eq:constrProblem} is equivalent to Problem \cref{eq:LP}.
\end{theorem}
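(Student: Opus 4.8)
The plan is to show the two problems are equivalent by a two-way argument: first that every feasible point of the relaxed linear program \cref{eq:LP} can be mapped to a feasible policy of \cref{eq:constrProblem} with the same (or better) objective, and conversely that the optimal policy of \cref{eq:constrProblem} induces a feasible point of \cref{eq:LP} attaining the same value. Since \cref{theo:optCostCoeff} already establishes that \cref{eq:constrProblem} is equivalent to \cref{eq:lagrFinal}, it suffices to show that relaxing the Bellman equalities in \cref{eq:lagrFinal} to the pair of inequality families in \cref{eq:LP} does not change the optimal value. This is the standard ``linear-programming formulation of optimal stopping'' argument, carried over from \citet{fauss2015linear, reinhard2019bayesian}, and the novelty is only in checking that the $M$-hypothesis stopping cost $g(\stat{n}) = \min_m D^\star_{m,n}(\stat{n})$ interacts correctly with the relaxation.

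First I would argue the ``$\leq$'' direction (LP value dominates). Take any feasible $(\costDet,\costEst,\{\rho_n\})$ for \cref{eq:LP}. The constraints $\rho_n \le D^\star_{m,n}$ for every $m$ give $\rho_n \le \min_m D^\star_{m,n} = g(\stat{n})$, and combined with $\rho_n \le 1 + \int \rho_{n+1}\,\dInt\updateProbMeasure = d_n(\stat{n})$ we get $\rho_n \le \min\{g(\stat{n}), d_n(\stat{n})\}$ pointwise for $n<N$, and $\rho_N \le g(\stat{N})$. By backward induction on $n$ (starting from $\rho_N \le g = \rho_N^\star$ and using monotonicity of $B \mapsto \int B \,\dInt\updateProbMeasure$ to propagate the inequality through $d_n$), one obtains $\rho_n \le \rho_n^\star$ for all $n$, where $\rho_n^\star$ denotes the true Bellman solution from \cref{theo:optimalStoppingSol}. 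In particular $\rho_0(\stat{0}) \le \rho_0^\star(\stat{0})$, so the LP objective is at most the objective of \cref{eq:lagrFinal} at the same $(\costDet,\costEst)$; maximizing over the latter, the LP optimum is $\le$ the optimum of \cref{eq:lagrFinal}, which by \cref{theo:optCostCoeff} equals the optimum of \cref{eq:constrProblem}.

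For the reverse ``$\geq$'' direction, I would take the optimal coefficients $\costDetOpt_{\detConstr[],\estConstr[]}, \costEstOpt_{\detConstr[],\estConstr[]}$ from \cref{eq:dualProblem} and the associated Bellman functions $\rho_n^\star$, and simply exhibit $(\costDetOpt_{\detConstr[],\estConstr[]}, \costEstOpt_{\detConstr[],\estConstr[]}, \{\rho_n^\star\})$ as a feasible point of \cref{eq:LP}: by construction $\rho_n^\star = \min\{g, d_n\} \le D^\star_{m,n}$ for each $m$ and $\rho_n^\star \le d_n = 1 + \int \rho_{n+1}^\star\,\dInt\updateProbMeasure$, and $\rho_{n}^\star$ is $\updateProbMeasure$-integrable, hence in $\mathcal{L}$, by \cref{corr:integrable}. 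This point attains the LP objective value $\rho_0^\star(\stat{0}) - \sum_m p(\Hyp_m)(\costDetOpt_{\detConstr[],\estConstr[]_m}\detConstr + \costEstOpt_{\detConstr[],\estConstr[]_m}\estConstr) = L_{\detConstr[],\estConstr[]}(\costDetOpt_{\detConstr[],\estConstr[]},\costEstOpt_{\detConstr[],\estConstr[]})$, which by \cref{theo:optCostCoeff} equals $\E[\tau \mid \policy = \policyOptCerr]$, the optimal value of \cref{eq:constrProblem}. Combining the two directions gives equality of optimal values, and the optimal policy is recovered from the LP solution exactly as in \cref{corr:optPolicy} (stop when $g(\stat{n}) = \rho_n(\stat{n})$, decide by \cref{eq:optDecRuleMult}, estimate by \cref{eq:optEst}).

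The main obstacle I anticipate is not the algebra of the two inequalities but the measure-theoretic bookkeeping needed to make the backward induction rigorous over the infinite-dimensional variable $\rho_n \in \mathcal{L}$: one must ensure the integrals $\int \rho_{n+1}\,\dInt\updateProbMeasure$ are well-defined and finite for every feasible $\rho_{n+1}$ (so the constraint $\rho_n \le 1 + \int \rho_{n+1}\,\dInt\updateProbMeasure$ makes sense pointwise in $\stat{n}$), and that the pointwise inequality $\rho_n \le \rho_n^\star$ is preserved under the integral operator uniformly in $\stat{n}$. This is handled by restricting to $\mathcal{L}$ and invoking \cref{corr:integrable}, together with the fact that the stopping costs $D^\star_{m,n}$ are bounded by a finite quantity (a consequence of \cref{ass:finiteSOM} and finiteness of the cost coefficients), so that any feasible $\rho_n$ is bounded above and the backward recursion stays finite; the lower bound $\rho_n \ge 0$ is not needed for feasibility but follows a posteriori at the optimum. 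A secondary, purely cosmetic point is to note that although \cref{eq:LP} is an uncountable-dimensional ``LP'', its solution coincides with that of a genuine finite LP after discretizing $\stateSpaceStat$, exactly as discussed in \citet{fauss2015linear}; this is a remark rather than part of the equivalence proof.
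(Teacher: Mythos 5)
Your argument is correct and is essentially the same one the paper relies on: the paper defers the proof to \citet{reinhard2019bayesian} (following \citet{fauss2015linear}), where the equivalence is established by exactly this two-sided argument — any LP-feasible $\rho_n$ is dominated by the Bellman solution via backward induction and monotonicity of the integral operator, while the Bellman solution itself is feasible (integrability by \cref{corr:integrable}) and attains the value, which \cref{theo:optCostCoeff} identifies with the optimum of \cref{eq:constrProblem}. The only point stated a bit loosely is the recovery of the optimal policy, which requires noting that a maximizer of \cref{eq:LP} can be taken to be (or agrees a.e.\ on the relevant support with) the Bellman solution, but this is the same standard observation made in the cited references.
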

For a proof see \cite{reinhard2019bayesian}.
\subsection{Projected Gradient Ascent}
Besides the previously presented \ac{LP}, we present a second approach that is not as sensitive to the size of the design problem as the \ac{LP}. This approach uses a projected gradient ascent to obtain the optimal cost coefficients and is summarized in \cref{alg:gradAsc}.

At each iteration, the cost functions, as stated in \cref{theo:optimalStoppingSol}, are calculated for a given set of cost coefficients. Next, given the policy induced by these cost functions, the gradient of the objective in \cref{eq:lagrFinal} has to be obtained, which is given by
\begin{align}\label{eq:gradLex}
\begin{split}
 \nabla_{\costDet} L_{\detConstr[],\estConstr[]}(\costDet, \costEst) & = \bigl[p(\Hyp_1)(\errorDet{}{1}\! -\! \detConstr[1]),\ldots, p(\Hyp_M)(\errorDet{}{M}\!-\! \detConstr[M]) \bigr]\,,\\
 \nabla_{\costEst} L_{\detConstr[],\estConstr[]}(\costDet, \costEst) & = \bigl[p(\Hyp_1)(\errorEst{}{1}\! -\! \estConstr[1]),\ldots, p(\Hyp_M)(\errorEst{}{M}\!-\!\estConstr[M]) \bigr]\,.
 \end{split}
\end{align}
This gradient can, e.g., be calculated based on the definitions of the performance measures as stated in \crefrange{eq:perfMeasuresRecFirst}{eq:perfMeasuresRecLast}.
To update the cost coefficients, the old coefficients are shifted in the direction of the gradient and then projected onto the set of feasible coefficients, i.e., the set of non-negative reals.
To control the convergence speed of the algorithm, the gradient is scaled by a factor $\scaleFct$.
These steps are repeated until the solution converges to an optimum. Recall from \cref{sec:proofOptCoeff} that the optimality criteria for the cost coefficients are
\begin{align}\label{eq:optCritCoeff}
 \begin{split}
  \costDet_m(\errorDet{}{m} - \detConstr) & = 0\,,\; m=1,\ldots,M\,,\\
  \costEst_m(\errorEst{}{m} - \estConstr) & = 0\,,\; m=1,\ldots,M\,.
 \end{split}
\end{align}
Hence, the procedure has to be repeated until \cref{eq:optCritCoeff} holds (approximately).

Since the calculation of the performance measures based on their recursive definition can become numerically unstable and hence lead to an inaccurate gradient, a modification of the aforementioned method can be used. Similarly to \citet{reinhard2019JointSNR}, the detection and estimation errors in \cref{eq:gradLex} and \cref{eq:optCritCoeff} can be replaced by their Monte Carlo estimates.

As Problem \cref{eq:lagrFinal} is convex, the projected gradient ascent converges to a global optimum irrespective of the choice of the starting point. Nevertheless, the choice of the starting point is crucial for the convergence speed. To have a fast convergence, we suggest to solve \cref{eq:LP} on a coarse grid if possible and then run \cref{alg:gradAsc} on a finer grid to obtain the optimal cost coefficients.

\begin{algorithm}[!t]
  \newlength{\mynegid}
  \setlength{\mynegid}{-.25cm}
  \begin{algorithmic}[1]
    \Inputs{$\detConstr[1],\ldots, \detConstr[M], \estConstr[1],\ldots,\estConstr[M],\costDet^{(0)}, \costEst^{(0)}, \scaleFct$}
    \Initialize{Set $k\gets 0$}
    \Repeat
      \State \hspace{\mynegid}Set $k \gets k+1$
      \State \hspace{\mynegid}Get policy from \cref{corr:optPolicy} using $\costDet^{(k-1)},\costEst^{(k-1)}$
      \State \hspace{\mynegid}Get gradients from \cref{eq:gradLex} \State \hspace{\mynegid}Set $\costDet^{(k)} = \max\{\costDet^{(k-1)} + \scaleFct\nabla_{\costDet} L_{\detConstr[],\estConstr[]}(\costDet^{(k-1)}, \costEst^{(k-1)}),0\}$
      \State \hspace{\mynegid}Set $\costEst^{(k)} = \max\{\costEst^{(k-1)} + \scaleFct\nabla_{\costEst} L_{\detConstr[],\estConstr[]}(\costDet^{(k-1)}, \costEst^{(k-1)}),0\}$
    \Until{\cref{eq:optCritCoeff} holds approximately.}
    \State \Return{$\costDet^{(k)}, \costEst^{(k)}$}
  \end{algorithmic}
  \caption{Projected Gradient Ascent}
  \label{alg:gradAsc}
\end{algorithm}

\subsection{Discussion}
Although the algorithms presented in this section are in general problem independent, their tractability highly depends on the size of the problem.
With increasing dimensionality of the sufficient statistic, the discretization of its space and hence, the calculation of the optimal policy becomes infeasible.
However, even if the observations are high-dimensional, it might still be possible to find a low-dimensional sufficient statistic.
An example for such an application could be a sensor network in which all sensors send their observations to a fusion center that performs the inference.

Besides the dimensionality of the sufficient statistic, also the dimensionality of the observations may play a role in the implementation of the algorithms.
The cost for continuing, i.e., $d_n(\stat{n}) = 1 + \E[\rho_{n+1}(\stat{n+1})\given\stat{n}]$, can be 
calculated numerically in a straightforward manner for the case of scalar observations. For more dimensional observations, approximating the integral by Markov Chain Monte Carlo integration, as, e.g., in \citet{reinhard2020Distributed}, might be more suitable.

Nevertheless, to overcome this shortcoming, more sophisticated design approaches that do not require a discretization of the state space have to be found.
We have ongoing work on machine learning techniques to learn the policy from observed trajectories rather than calculating the cost function on a discretized state space.

 \section{Numerical Results}\label{sec:numResults}
In this section, we provide two numerical examples to illustrate and validate the proposed approach. First, a simple example is presented to illustrate the basic properties of the optimal method. The second example is more complex and shows how to apply the proposed method to real-life applications.

In order to solve the linear program in \cref{eq:LP}, the continuous spaces are first discretized. The discrete linear program is then solved by the Gurobi optimizer \cite{gurobi} which is called via the MATLAB cvx interface \cite{cvx,gb08}.
We add a regularization term to the objective in \cref{eq:LP} to ensure numerical stability.
See, e.g., \citet[Appendix G]{reinhard2019bayesian} for details. To validate the performance of the designed schemes, a Monte Carlo simulation is performed for both examples. 

\subsection{Benchmarking Method}
In order to compare the proposed approach with existing methods, we choose a two-step procedure as benchmarking method, namely, a standard sequential detector for multiple hypotheses followed by an \ac{MMSE} estimator.
Although there exist different sequential detectors for the multiple hypotheses case, we resort to the \ac{MSPRT}\citep{tartakovsky2014sequential,tartakovsky1998asymptotic} due to its easy implementation and favorable theoretical properties. Alternative sequential detectors for multiple hypotheses can be found in, e.g., \citet{baum1994sequential, tartakovsky2014sequential}.
This two-step procedure is not optimal for the joint detection and estimation problem, but the \ac{MSPRT} is asymptotically optimal for the detection part and the \ac{MMSE} estimator is the optimal estimator with respect to the \ac{MSE}.
For the \ac{MSPRT}, the pair-wise log-likelihood ratios for hypotheses $\Hyp_m$ and $\Hyp_j$ are used and are defined as
\begin{align*}
  \logLikelihoodRatio_{mj}(\stat{n}) = \log\biggl(\frac{p(\stat{n}\given\Hyp_m)}{p(\stat{n}\given\Hyp_j)}\biggr)\,, \quad 
m,j =1,\ldots,M\,,\;\; m  \neq j\,.
\end{align*}
In general, the stopping and decision rules of the \ac{MSPRT} are given by\citep[Eqs. (4.3) and (4.4)]{tartakovsky2014sequential}:
\begin{align*}
   \stopR_n^\text{MSPRT} & =\begin{cases}
					  1 & \exists m: \logLikelihoodRatio_{mj} \geq A_{mj},\;\forall j\in\{1,\ldots,M\}\setminus m\,,\\
                                          0 & \text{else}\,,\\
                                       \end{cases}\\
   \dec_n^\text{MSPRT} & = \biggl\{
					  m: \logLikelihoodRatio_{mj} \geq A_{mj},\;\forall j\in\{1,\ldots,M\}\setminus m
                                       \biggr\}\,.
\end{align*}The thresholds $A_{mj}$, which are used in the stopping and decision rules, have now to be set such that the target error probabilities are met.
To keep the probabilities of falsely rejecting hypothesis $\Hyp_m$ under a certain level $\detConstr$, $m=1,\ldots,M$, the thresholds have to be calculated as \citep[Eq. (4.4)]{tartakovsky2014sequential}
\begin{align*}
 A_{mj} = A_m\approx \log(M/\detConstr)\,.
\end{align*}
Since the presented optimal sequential scheme is a truncated one, we use a truncated two-step procedure for the sake of fair comparison. Hence, the stopping and decision rules at the truncation point are given by
\begin{align*}
   \stopR_N^\text{MSPRT} & = 1\,\quad\text{and}\quad\dec_N^\text{MSPRT} =  \argmax_m \sum_{j=1,j\neq m}^M \logLikelihoodRatio_{mj}\,.
\end{align*}

\subsection{Shift-in-Mean}
The first numerical example is used to show the basic properties of the optimal sequential scheme. Here, we consider three different hypotheses with a Gaussian likelihood. Under each hypothesis, the mean follows a different distribution, whereas the variances of the likelihood are equal. The three different hypotheses are given by
\begin{align*}
  \begin{split}
   \Hyp_1:&\; x_n\given\mu_1 \overset{\text{\ac{iid}}}{\sim} \norm{\mu_1}{\sigma^2}\,, -\mu_1 + 1.3 \sim \Gam(1.7,1)\,,\\ \Hyp_2:&\; x_n\given\mu_2 \overset{\text{\ac{iid}}}{\sim} \norm{\mu_2}{\sigma^2}\,, \phantom{- + 1.3}\mu_2 \sim \unif(-1,1)\,,\\ \Hyp_3:&\; x_n\given\mu_3 \overset{\text{\ac{iid}}}{\sim} \norm{\mu_3}{\sigma^2}\,, \phantom{-}\mu_3 - 1.3 \sim \Gam(1.7,1)\,, \end{split}
\end{align*}
where $\norm{\mu}{\sigma^2}$ is the normal distribution with mean $\mu$ and variance $\sigma^2$, $\unif(l,u)$ is the uniform distribution on the interval $[l,u)$ and $\Gam(a,b)$ is the Gamma distribution with shape and scale parameters $a$ and $b$, respectively.
All three hypotheses have equal prior probabilities and the variance is set to $\sigma^2=4$.
The aim is to design an optimal sequential scheme to simultaneously test the three hypotheses and to estimate the mean. The optimal scheme should not use more than $100$ samples. The constraints on the detection errors as well as on the estimation accuracy are summarized in the second column of \cref{tbl:simResErr}.
In order to design the optimal scheme, a sufficient statistic in the sense of \ref{ass:suffStat} has to be found. Let
\begin{align}
 \bar{x}_n &= \frac{1}{n}\sum_{k=1}^n x_k \label{eq:xbar}\\
 \bar{s}_n^2 &= \frac{1}{n} \sum_{k=1}^n \obsScalar{k}^2 - \bar{x}_n^2 \label{eq:sbar}
\end{align}
denote the sample mean and the sample variance, respectively. Then, the likelihood can be written as
\begin{align}
 p(\obs\given&\Hyp_m,\mu_m) = \nonumber \\
             & \phantom{ = }\,\,\, (2\pi\sigma^2)^{-\frac{n}{2}} \exp\biggl(-\frac{n\bar{s}_n^2 + n(\bar{x}_n - \mu_m)^2}{2\sigma^2}\biggr)\label{eq:lh}\\
                        & = (2\pi\sigma^2)^{-\frac{n}{2}} \exp\biggl(-\frac{n\bar{s}_n^2}{2\sigma^2}\biggr)\exp\biggl( - \frac{n(\bar{x}_n - \mu_m)^2}{2\sigma^2}\biggr)\,. \nonumber
\end{align}
Since the variance is known, the relation between the data and the random mean is completely described by $\bar{x}_n$, i.e.,
\begin{align*}
 p(\obs\given\Hyp_m,\mu_m) & \propto \exp\biggl( - \frac{n(\bar{x}_n - \mu_m)^2}{2\sigma^2}\biggr)
\end{align*}and hence, $\bar{x}_n$ is used as a sufficient statistic in the sense of \ref{ass:suffStat}. For the likelihood of the sufficient statistic it holds that
\begin{align*}
 p(\stat{n}\given\Hyp_m,\mu_m) \propto \exp\biggl(-\frac{n(\bar{x}_n - \mu_m)^2}{2\sigma^2}\biggr) \propto \norm{\mu_m}{\frac{\sigma^2}{n}}\,.
\end{align*}Note that the likelihood is continuous in the sufficient statistic as well as in the random parameter $\mu_m$, which itself follows a continuous distribution. Therefore, the posterior probabilities $p(\Hyp_m\given\stat{n})$ are continuous random variables with respect to $\stat{n}$ and, hence, the boundary of the stopping region is a P-null set according to \cref{lem:BoundPnull}.
The discretization of the continuous spaces is summarized in \cref{tbl:simContQuant}. 
\begin{table}[!t]
\centering
 \caption{Shift-in-Mean scenario: Simulation setup.}
\label{tbl:simContQuant}
  \begin{tabular}{@{}r c c@{}}
  \toprule
  quantity & domain & \#grid points \\
  \midrule
  $\mu$ & $[-16,16 ]$ & $7000$ \\
  $\bar{x}_n$ & $[-8,8 ]$ & $1600 $ \\
  $x$ & $[-15,15 ]$ & $6000$ \\
\bottomrule
 \end{tabular}
 \end{table}
The posterior probabilities of the hypotheses $p(\Hyp_m\given\stat{n})$ and the posterior variances $\Var\bigl[\mu_m\given\Hyp_m,\stat{n}\bigr]$ are calculated by numerical integration.
The procedure is then designed by solving the \ac{LP} in \cref{eq:LP}.
\begin{figure}[!t]
  \centering
  \includegraphics{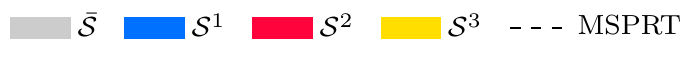}\\
\includegraphics{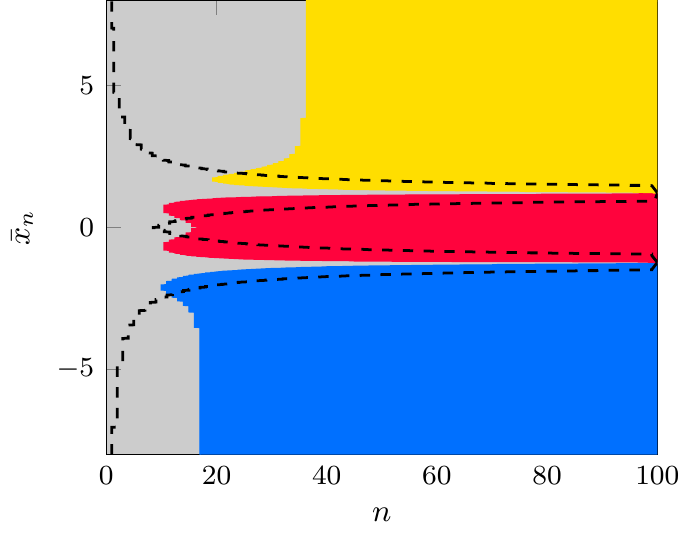}\caption{Shift-in-mean scenario: optimal and sub-optimal (\ac{MSPRT}) policy. The region in which the optimal method continues sampling is denoted by $\stopRegionCompl[]$ and the regions in which the optimal method stops and decides in favor of $\Hyp_m$ are denoted by $\stopRegionDec{}{m}$, $m\in\{1,2,3\}$. The dashed line indicates the boundaries of the \ac{MSPRT}.}\label{fig:simPolicy}
\end{figure}
The resulting coefficients are given in \cref{tbl:simCoeff} and the optimal policy parametrized by these coefficients as well as the boundaries of the \ac{MSPRT}  are shown in \cref{fig:simPolicy}. One can see that for small $n$, the optimal policy is mainly influenced by the detection constraint. For small $n$, a large value of $\bar{x}_n$ results in a high certainty of $\Hyp_3$ but, on the other hand, also in a high estimation error. Hence, the optimal scheme continues sampling in this case and stops only for $n>35$. Contrary to this, the \ac{MSPRT} stops directly for large $\bar{x}_n$, even if $n$ is small, and decides in favor of $\Hyp_3$. With increasing $n$, the optimal policy and the one of the \ac{MSPRT} become very similar. Nevertheless, the \ac{MSPRT} has a much broader corridor for continuing than the optimal procedure.

To validate the performance of the optimal and the two-step procedure, a Monte Carlo simulation with $10^6$ runs is performed. The results are summarized in \cref{tbl:simResErr} and \cref{tbl:simResRL}.
For the optimal scheme, the constraints are, within the range of Monte Carlo uncertainty, fulfilled with equality and the expected run-length obtained via \cref{eq:LP} is almost equal to the empirical average run-length. For the two-step procedure, the empirical errors under $\Hyp_2$ are close to the constraints, but the empirical average run-length of the two-step procedure under $\Hyp_2$ is larger than the one of the optimal scheme.
Moreover, under $\Hyp_1$ and $\Hyp_3$, the average run-length and the detection errors of the two-step procedure are much smaller than the ones of the optimal scheme. 
Though the two-step procedure has smaller average run-lengths and error probabilities under $\Hyp_1$ and $\Hyp_3$, the empirical \ac{MSE} is $4$ and $8$ times as large as the constraint, respectively.
\begin{table}[!t]
 \centering
  \caption{Shift-in-mean scenario: simulation results.}
 \subfloat[Detection and estimation errors.]{\label{tbl:simResErr}\setlength{\tabcolsep}{4pt}
\begin{tabular}{@{}rccc@{}}
	\toprule
	& constraints & optimal & two-step \\
	\midrule

	$\errorDet{}{1}$ & $0.050$ & $0.050$ & $0.021$ \\
	$\errorDet{}{2}$ & $0.050$ & $0.049$ & $0.053$ \\
	$\errorDet{}{3}$ & $0.050$ & $0.050$ & $0.021$ \\
	\midrule
	$\errorEst{}{1}$ & $0.200$ & $0.209$ & $0.810$ \\
	$\errorEst{}{2}$ & $0.150$ & $0.157$ & $0.182$ \\
	$\errorEst{}{3}$ & $0.100$ & $0.105$ & $0.811$ \\

	\bottomrule
\end{tabular}
 }\hfill
 \subfloat[Expected run-lengths.]{\label{tbl:simResRL}\begin{tabular}{@{}rccc@{}}
	\toprule
	& calculated & \multicolumn{2}{c}{simulated} \\
	\cmidrule(lr){2-2} \cmidrule(lr){3-4}
	& optimal & optimal & two-step \\
	\midrule
	
	$\E[\tau\given\Hyp_1]$ & - & $15.06$ & $10.13$ \\
	$\E[\tau\given\Hyp_2]$ & - & $14.472$ & $19.15$ \\
	$\E[\tau\given\Hyp_3]$ & - & $31.77$ & $10.08$ \\

	$\E[\tau]$ & $20.395$ & $20.523$ & $13.120$ \\

	\bottomrule
\end{tabular}
 }\hfill
 \subfloat[Optimal cost {coefficients}.]{\label{tbl:simCoeff}\setlength{\tabcolsep}{10pt}
\begin{tabular}{@{}r|cc@{}}
	\toprule
	$m$ & $\costDet_m^\star$ & $\costEst_m^\star$ \\
	\midrule
	$1$ & $62.99$ & $74.98$ \\
	$2$ & $82.43$ & $112.68$ \\
	$3$ & $85.69$ & $342.02$\\
	\bottomrule
\end{tabular}
 }
\end{table}

\subsection{Joint 4-ASK Decoding and Noise Power Estimation}
\acused{ASK}
This numerical example, which was partly presented in \citet{reinhard2020Multiple}, gives an example for how to apply  the proposed framework to a real-world problem.
We consider a 4-\acl{ASK} (4-\ac{ASK}) symbol to be transmitted over an additive white Gaussian noise channel with random noise power. At the receiver side, we want to jointly decode the transmitted symbol and estimate the noise power.
Using a linear model, the received signal is given by
\begin{align*}
 x_n = A + w_n\,,\end{align*}
where $A\in\{A_1,A_2,A_3,A_4\}$ denotes the \ac{ASK} symbol and $w_n$, $n=1,\ldots,N$, is the additive white Gaussian noise process.
According to assumption \ref{ass:HparamConst}, the transmitted symbol $A$ does not change during the observation period. 
The distribution of the noise power $\sigma^2$ follows an inverse Gamma distribution itself with known hyperparameters. Here, the signal decoding is a hypothesis test. Hence, the four different hypotheses can be written as
\begin{align*}
 \Hyp_m:&\; \RVidx\given\sigma^2 \overset{\text{\ac{iid}}}{\sim} \norm{A_m}{\sigma^2}\,, \sigma^2 \sim \invGam(a,b)\,, \end{align*}
where $m=1\,\ldots,4$, and $\invGam(a,b)$ is the inverse Gamma distribution with shape and scale parameters $a$ and $b$, respectively. The \ac{pdf} of the inverse Gamma distribution is given by \citep[Definition 8.22]{barber2012bayesian}
\begin{align*}
 p(\sigma^2) = \frac{b^a}{\Gamma(a)} \bigl(\sigma^2\bigr)^{-a-1}e^{-\frac{b}{\sigma^2}}\,,
\end{align*}
where $\Gamma(\cdot)$ denotes the Gamma function.
First, a sufficient statistic in the sense of \cref{ass:suffStat} has to be found. As shown in \cref{eq:lh}, the likelihood is completely determined by $\bar{x}_n$ as defined in \cref{eq:xbar} and $\bar{s}^2_n$ as defined in \cref{eq:sbar}.
Hence, the sufficient statistic $\stat{n}=[\bar{x}_n, \bar{s}_n^2]$ is used in the sequel. 
Since the inverse Gamma distribution is a conjugate prior for the variance of a Gaussian distribution, we can provide analytical expressions for all posterior quantities. First, the posterior distribution of the variance under $\Hyp_m$ follows itself an inverse Gamma distribution \citep[Section 8.8.3]{barber2012bayesian}, i.e., \begin{align*}
 \sigma^2\given\Hyp_m,\stat{n} \sim \invGam(\shapePost, \scalePost)\,,
\end{align*}with the posterior parameters
\begin{align}
  \label{eq:paramPostShape}
 \shapePost & = \shape + \frac{n}{2}\,,\\
 \begin{split}\label{eq:paramPostScale}
 \scalePost & = \scale + 0.5 \sum_{k=1}^n (x_k-A_m)^2\,\\
	    & = \scale + 0.5 n\Bigl(\bar{s}_n^2 + (\bar{x}_n - A_m)^2\Bigr)\,.\end{split}
\end{align}
The posterior mean and the posterior variance under $\Hyp_m$ are then given by \citep[Definition 8.22]{barber2012bayesian}
\begin{align*}
 \E[\sigma^2\given\Hyp_m, \stat{n}] & = \frac{\scalePost}{\shapePost-1}\,,  \quad \text{for}\; \shapePost>1\,,\\
 \Var[\sigma^2\given\Hyp_m, \stat{n}] & = \frac{(\scalePost)^2}{(\shapePost-1)^2(\shapePost-2)}\,,\quad \text{for}\; \shapePost>2\,.
\end{align*}
Moreover, an analytical expression for the posterior probabilities of the hypotheses $p(\Hyp_m\given\stat{n})$ as well as for the posterior predictive $p(\xnew\given\stat{n})$ can be expressed as
\begin{align}
 p(\Hyp_m\given\stat{n}) & =  K \bigl(2\pi\bigr)^{-\frac{n}{2}} p(\Hyp_m) \frac{\scale^\shape}{(\scalePost)^\shapePost} \frac{\Gamma(\shapePost)}{\Gamma(\shape)}\,, \label{eq:postH}\\
 p(\xnew\given\stat{n}) & = K \bigl(2\pi\bigr)^{-\frac{n+1}{2}} \sum_{m=1}^M p(\Hyp_m) \frac{\scale^\shape}{(\scalePP)^\shapePP}\frac{\Gamma(\shapePP)}{\Gamma(\shape)}  \label{eq:postPred}\,,
\end{align}where the parameters of the posterior predictive $\scalePP$, $\shapePP$ and the normalization constant $K$ are given by
\begin{align*}
 \shapePP & = \shapePost + 0.5\,, \\
 \scalePP & = \scalePost + 0.5(\xnew-A_m)^2\,, \\
 K & =\left(\sum_{m=1}^M p(\Hyp_m\given\stat{n})\right)^{-1}\,.
\end{align*}A detailed derivation of \cref{eq:postH} and \cref{eq:postPred} is laid down in \cref{app:derPostH} and \cref{app:derPostPred}, respectively. Since $p(\Hyp_m\given\stat{n})$ are continuous random variables with respect to $\stat{n}$, the boundary $\stopRegionBound$ is a P-null set according to \cref{lem:BoundPnull}.

The aim is to design an optimal sequential scheme, which uses at most $N=50$ samples while the detection and estimation errors are respectively constrained to be below $0.05$ and $0.15$ under all four hypotheses. The \ac{ASK} symbols were set to $A_m\in\{-2,-1,1,2\}$ and the parameters of the noise distribution are given by $a=2.1$ and $b=0.9$.
All hypotheses have the same prior probabilities. 

In order to design the optimal scheme, we first use the \ac{LP} approach on a coarse grid to get an initial set of cost coefficients. These cost coefficients are then used as initial values for the projected gradient ascent which we run on a finer grid.
To reduce the computational load, we exploit the symmetry of the problem at hand. This means that we solve both optimization problems only for $\costDet_m$, $\costEst_m$, $m\in\{1,2\}$ and $\rho_n(\stat{n})$, $\bar{x}_n\leq0$ and complete the missing values once the scheme is designed.

\begin{table}[!t]
 \centering
  \caption{Joint 4-ASK signal decoding and noise power estimation: simulation setup.}
  \label{tbl:4askSetup}
 \begin{tabular}{@{}c c c c c@{}}
  \toprule
  &  \multicolumn{2}{c}{coarse grid} & \multicolumn{2}{c}{fine grid} \\
  \cmidrule(lr){2-3} \cmidrule(lr){4-5}
  quantity & domain & \#grid points & domain & \#grid points \\
  \midrule
  $\bar{x}_n$ & $[-9,9]$ & $121$ &$[-14,14]$ & $243$ \\
  $\bar{s}^2_n$ & $[0,30]$ & $121$ &$[0,60]$ & $242$ \\
  $x$ &  $[-25,25]$ & $2100$ &$[-25,25]$ & $2100$ \\
  \bottomrule
 \end{tabular}
 \end{table}
As mentioned before, the \ac{LP} approach is solved on a coarse grid, whereas a finer grid is used for the gradient ascent. The discretization used for the two algorithms is summarized in \cref{tbl:4askSetup}. For the gradient ascent, the gradients are estimated by Monte Carlo simulations with $10^6$ runs. The scaling factor for the gradient is set to $\gamma=1000$ to speed up convergence.
The stopping criterion for \cref{alg:gradAsc} is
\begin{align*}
  \costDet_m = 0 \quad \lor \quad  \lvert\errorDet{}{m} - \detConstr\rvert  & \leq 10^{-3}\,,\; m=1,\ldots,M\,,\\
  \costEst_m = 0 \quad \lor \quad \lvert\errorEst{}{m} - \estConstr\rvert & \leq 5\cdot10^{-3}\,,\; m=1,\ldots,M\,.
\end{align*}
The designed scheme is evaluated using $10^6$ Monte Carlo runs.

\cref{tbl:4askRes} summarizes the Monte Carlo results for the optimal procedure, along with those of the two-step procedure. The optimal sequential scheme hits the constraints exactly, within the tolerance.
Moreover, the \ac{MSPRT} achieves smaller empirical detection errors than the constraints, but the estimation constraints are violated since the \ac{MSPRT} does not take the estimation errors into account.
In \cref{tbl:4askResRL}, the empirical run-lengths are summarized for both procedures. Though the two-step procedure has a smaller empirical run-length than the optimal one, this comes at the cost of violating the constraints on the \ac{MSE} as mentioned previously.

\begin{table}[!t]
 \centering
  \caption{Joint 4-ASK signal decoding and noise power estimation: simulation results.}
  \label{tbl:4askRes}
 \subfloat[Detection and estimation errors.\label{tbl:4askResErr}]{\setlength{\tabcolsep}{4pt}
\begin{tabular}{@{}rcccc@{}}
	& \multicolumn{2}{c}{constraints} & \multicolumn{2}{c}{empirical}\\	
	\cmidrule(lr){2-3} \cmidrule(lr){4-5}
	& $\detConstr[]$/$\estConstr[]$&tolerance & optimal & two-step \\
	\midrule
	$\errorDet{}{1}$ & $0.050$ & $\pm0.001$ &$0.051$ & $0.029$ \\
	$\errorDet{}{2}$ & $0.050$ & $\pm0.001$ &$0.049$ & $0.038$ \\
	$\errorDet{}{3}$ & $0.050$ & $\pm0.001$ &$0.049$ & $0.037$ \\
	$\errorDet{}{4}$ & $0.050$ & $\pm0.001$ &$0.051$ & $0.030$ \\
	\midrule
	$\errorEst{}{1}$ & $0.150$ & $\pm0.005$ &$0.151$ & $0.320$ \\
	$\errorEst{}{2}$ & $0.150$ & $\pm0.005$ &$0.151$ & $0.246$ \\
	$\errorEst{}{3}$ & $0.150$ & $\pm0.005$ &$0.153$ & $0.243$ \\
	$\errorEst{}{4}$ & $0.150$ & $\pm0.005$ &$0.149$ & $0.311$ \\
	\bottomrule
\end{tabular}
 }\hspace{1cm}
 \subfloat[Expected run-lengths.\label{tbl:4askResRL}]{\setlength{\tabcolsep}{2pt}
\begin{tabular}{@{}rcc@{}}
	 & optimal & two-step \\
	\midrule
	$\E[\tau\given\Hyp_1]$ & $6.54$&$5.67$\\
	$\E[\tau\given\Hyp_2]$ & $6.33$&$5.93$\\
	$\E[\tau\given\Hyp_3]$ & $6.32$&$5.92$\\
	$\E[\tau\given\Hyp_4]$ & $6.54$&$5.67$\\
	\midrule
	$\E[\tau]$ & $6.43$&$5.80$\\
	\bottomrule
\end{tabular}
 }
\end{table}
\begin{figure*}[!t]
 \centering
 \includegraphics{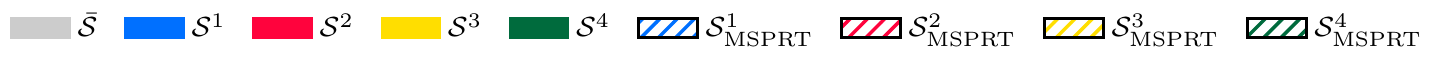}\\
 \vspace{-1.5em}
 \subfloat[$n=5$\label{fig:4askPolicy_n5}]{\includegraphics{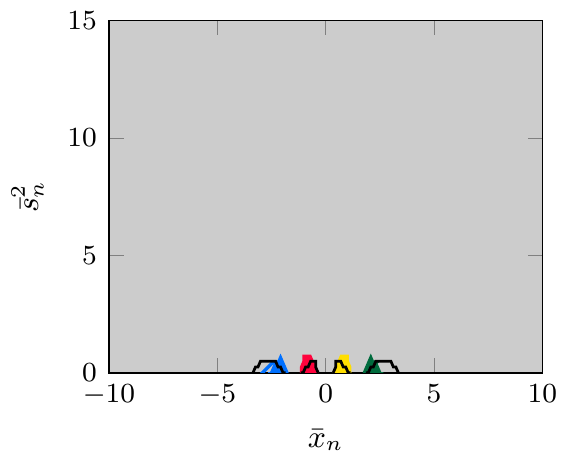}}\hfil
 \subfloat[$n=10$\label{fig:4askPolicy_n10}]{\includegraphics{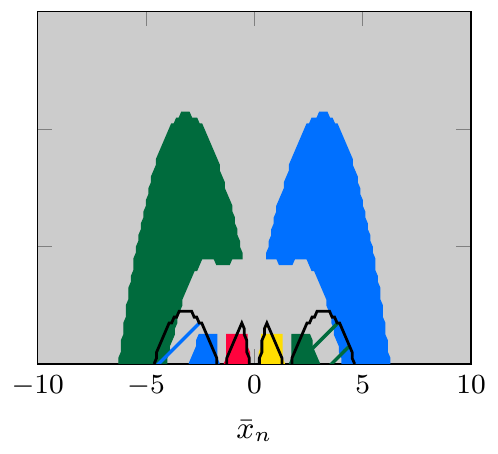}}\hfil
 \subfloat[$n=30$\label{fig:4askPolicy_n30}]{\includegraphics{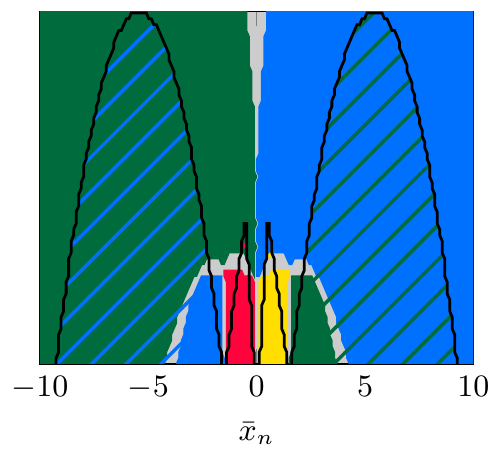}}\hfil
 \caption{Joint 4-ASK signal decoding and noise power estimation: evolution of the optimal policy (filled) and \ac{MSPRT} policy (hatched) over time. The region in which the optimal procedure continues sampling is denoted by $\stopRegionCompl[]$ and the regions in which the optimal procedure stops and decides in favor of $\Hyp_m$ are denoted by $\stopRegionDec{}{m}$. The regions in which the \ac{MSPRT} stops and decides in favor of $\Hyp_m$ are denoted by $\stopRegionDecMSPRT{m}$, i.e., $\stopRegionDecMSPRT{m}=\bigl\{\stat{n}\in\stateSpaceStat: \stopR_n^\text{MSPRT}(\stat{n})=1 \wedge \dec_n^\text{MSPRT}(\stat{n})=m\bigr\}$.}
 \label{fig:4askPolicyEvolution}
\end{figure*}
In \cref{fig:4askPolicyEvolution}, the evolution of both policies over time is shown for three distinct time instances. In this figure, the gray region is the complement of the stopping region of the optimal scheme and the other filled regions are the regions in which the optimal scheme stops and decides for a particular hypothesis.
The hatched areas indicate the regions in which the \ac{MSPRT} stops and decides in favor of a particular hypothesis.
For $n=5$ (\cref{fig:4askPolicy_n5}), most of the state space corresponds to the complement of the stopping region of the methods and there are four small areas in which both procedures, the optimal and the two-step, stop.
For both methods, these regions are still present for $n=10$, but larger compared to the ones in \cref{fig:4askPolicy_n5}.
In addition, there appear two regions for small/large values of $\bar{x}_n$ in which the optimal procedure stops and decides in favor of $\Hyp_4$/$\Hyp_1$.
Since a small value of $\bar{x}_n$ increases the certainty about $\Hyp_1$, a decision in favor of $\Hyp_4$ is not intuitive here.
As we consider a joint detection and estimation problem, the uncertainty about the true hypothesis and the true parameter affect the decision rule. Though a small value of $\bar{x}_n$ implies a high certainty about $\Hyp_1$, it leads to a high uncertainty about the true parameter at the same time. Hence, a decision in favor of $\Hyp_4$ is less costly.
For the \ac{MSPRT}, which does not encounter any estimation, this phenomenon is not visible at all.
The four regions for stopping the optimal procedure, which are present at the bottom of \cref{fig:4askPolicy_n5} and \cref{fig:4askPolicy_n10}, have grown further in \cref{fig:4askPolicy_n30}. Though these four regions have grown equally for the optimal method, the regions for stopping the \ac{MSPRT} and deciding in favor of $\Hyp_1$ and $\Hyp_4$ now cover a large area of the state space. At the same time, the two regions of the optimal policy, that appeared in \cref{fig:4askPolicy_n10}, cover almost the whole state space but make a decision opposite to the \ac{MSPRT}.

  \section{Conclusions}
Based on very mild assumptions on the underlying stochastic process, we have developed a flexible framework for sequential joint multiple hypotheses testing and parameter estimation.
The optimal scheme minimizes the expected run-length while fulfilling constraints on the probabilities of falsely rejecting a hypothesis as well as on the \ac{MSE}
. These procedures have been characterized by a set of non-linear Bellman equations. We have further shown a strong connection of the cost coefficients of the Bellman equations and the performance measures of the method, i.e., the probability of falsely rejecting a hypothesis and the \acp{MSE}. Based on this connection, we have presented two approaches to obtain the set of optimal cost coefficients. The first approach formulates the problem as a linear program and the second approach uses a projected gradient ascent.
The performance of the optimal procedure has been validated via two numerical examples. The first example has shown the basic properties of the optimal scheme, whereas the second example has been used to show applicability to real world problems. For both examples, Monte Carlo results have been provided. Moreover, the performance gap to a sub-optimal scheme, i.e., a matrix sequential probability ratio test followed by an \ac{MMSE} estimator, has been shown.
 
\appendix
\crefalias{section}{appsec}

\section{Proof of Corollary \ref{corr:integrable}}\label{sec:proofIntegrable}
It has to be shown that if all $\costDet_m$ and all $\costEst_m$, $m=1,\ldots,M$, are finite, then $\rho_{n+1}$ is $\updateProbMeasure$-integrable for all $\stat{n}\in\stateSpaceStat$ and all $0\leq n < N$.
From the definition of the cost function $\rho_n$, one can directly see that
\begin{align*}
 \int \rho_{n+1}\dInt\updateProbMeasure \leq  \int g\dInt\updateProbMeasure \leq \int D^\star_{m,n+1}\dInt\updateProbMeasure\,.
\end{align*}
With the definition of $D^\star_{n+1,m}$, $m=1,\ldots,M$, the integral on the right hand side can be written as
\begin{align}
 \int D^\star_{m,n+1}\dInt\updateProbMeasure & = \int \costEst_m p(\Hyp_m\given\stat{n+1}) \Var[\paramRV_m\given\Hyp_m, \stat{n+1}] + \sum_{i=1,i\neq m}^M \costDet_i p(\Hyp_i\given\stat{n+1}) \dInt\updateProbMeasure \nonumber \\
 &   = \costEst_m \int p(\Hyp_m\given\stat{n}, \obsScalar{n+1}) \Var[\paramRV_m\given\Hyp_m, \stat{n}, \obsScalar{n+1}]p(\obsScalar{n+1}\given\stat{n})\dInt\obsScalar{n+1} \label{eq:expMSE}\\
  & + \sum_{i=1,i\neq m}^M \costDet_i \int  p(\Hyp_i\given\stat{n}, \obsScalar{n+1}) p(\obsScalar{n+1}\given\stat{n})\dInt\obsScalar{n+1} \label{eq:expDet}\,.    
\end{align}
The integral in \cref{eq:expMSE} can be written as
\begin{align}
 \nonumber & p(\Hyp_m\given\stat{n})\int \Var[\paramRV_m\given\Hyp_m,\stat{n},\obsScalar{n+1}]p(\obsScalar{n+1}\given\Hyp_m,\stat{n})\dInt\obsScalar{n+1}\,\\
 ={}&p(\Hyp_m\given\stat{n})\int\int \Bigl(\param_m - \E[\paramRV_m\given\Hyp_m,\stat{n},\obsScalar{n+1}]\Bigr)^2p(\param_m,\obsScalar{n+1}\given\Hyp_m,\stat{n})\dInt\obsScalar{n+1}\dInt\param_m\,.\label{eq:post_var_Q-integrable_double}
\end{align}
By expanding the square, the double integral in \cref{eq:post_var_Q-integrable_double} becomes
\begin{align*}
& \int\int \param_m^2 p(\param_m,\obsScalar{n+1}\given\Hyp_m,\stat{n})\dInt\obsScalar{n+1}\dInt\param_m \\
 & - 2 \int\int \param_m\E[\paramRV_m\given\Hyp_m,\stat{n},\obsScalar{n+1}] p(\param_m,\obsScalar{n+1}\given\Hyp_m,\stat{n})\dInt\obsScalar{n+1}\dInt\param_m \\
 & + \int\int \Bigl(\E[\paramRV_m\given\Hyp_m,\stat{n},\obsScalar{n+1}]\Bigr)^2 p(\param_m,\obsScalar{n+1}\given\Hyp_m,\stat{n})\dInt\obsScalar{n+1}\dInt\param_m\,.\\
={}& \E[ \paramRV_m^2 \given\Hyp_m,\stat{n}]  - 2 \Bigl(\E[\paramRV_m\given\Hyp_m,\stat{n}]\Bigr)^2 \\
 &  + \int\int \Bigl(\E[\paramRV_m\given\Hyp_m,\stat{n},\obsScalar{n+1}]\Bigr)^2 p(\param_m,\obsScalar{n+1}\given\Hyp_m,\stat{n})\dInt\obsScalar{n+1}\dInt\param_m\,.
\end{align*}
By applying Jensen's inequality \citep[p. 228]{everitt2010cambridge}, the integral in the last equation can be upper bounded by
\begin{align*}
 & \int\int \Bigl(\E[\paramRV_m\given\Hyp_m,\stat{n},\obsScalar{n+1}]\Bigr)^2 p(\param_m,\obsScalar{n+1}\given\Hyp_m,\stat{n})\dInt\obsScalar{n+1}\dInt\param_m \\
 \leq{} &  \int\int \E[\paramRV_m^2\given\Hyp_m,\stat{n},\obsScalar{n+1}] p(\param_m,\obsScalar{n+1}\given\Hyp_m,\stat{n})\dInt\obsScalar{n+1}\dInt\param_m  \\
 ={}& \E[\paramRV_m^2\given\Hyp_m,\stat{n}]\,.
\end{align*}
Hence, by combining the previous results, it follows for the integral in  \cref{eq:expMSE} that
\begin{align*}
 \int p(\Hyp_m\given\stat{n} \obsScalar{n+1}) \Var[\paramRV_m\given\Hyp_m,\stat{n}, \obsScalar{n+1}]p(\obsScalar{n+1}\given\stat{n})\dInt\obsScalar{n+1} \leq 2p(\Hyp_m\given\stat{n}) \Var[\paramRV_m\given\Hyp_m,\stat{n}]\,.
\end{align*}
Moreover, \cref{eq:expDet} reduces to
\begin{align*}
 \sum_{i=1,i\neq m}^M \costDet_i \int  p(\Hyp_i\given\stat{n}, \obsScalar{n+1}) p(\obsScalar{n+1}\given\stat{n})\dInt\obsScalar{n+1} = \sum_{i=1,i\neq m}^M \costDet_i p(\Hyp_i\given\stat{n})\,.
\end{align*}
Hence, we can conclude that
\begin{align*}
 \int \rho_{n+1}\dInt\updateProbMeasure & \leq  \int D^\star_{m,n+1}\dInt\updateProbMeasure \\
 & \leq  \costEst_m 2p(\Hyp_m\given\stat{n})\Var[\paramRV_m\given\Hyp_m, \stat{n}] + \sum_{i=1,i\neq m}^M \costDet_i p(\Hyp_i\given\stat{n}) < \infty\,,
\end{align*}
which is finite as long as all $\costDet_m$, $\costEst_m$, $m=1,\ldots,M$, are finite since the posterior probabilities of $\Hyp_m$ are finite by definition and the posterior variance is finite by assumption \cref{ass:finiteSOM}. \hfill\qed

\section{Proof of Lemma \ref{lem:BoundPnull}}\label{sec:proofBoundPnull}
In order to prove \cref{lem:BoundPnull}, we first transfer the cost functions defined in \cref{theo:optimalStoppingSol}, the boundary of the stopping region defined in \cref{eq:regions} and the probability measure defined in \cref{eq:updateProbMeasure} to another domain. In the new domain they depend on the sufficient statistic as well as on the posterior probabilities. The posterior probability of $\Hyp_m$, $m=1,\ldots,M$, is denoted by $\postProb{m}$ in what follows. The posterior probabilities are collected in the tuple $\postProbTuple=(\postProb{1},\ldots,\postProb{M})$, which is defined on the metric state space $\left(\stateSpacePostProb, \metricPostProb\right)$.
Hence, the combined cost for deciding in favor of $\Hyp_m$ is given by
\begin{align*}
 \auxVarCostOptTilde{m,n}(\stat{n},\postProbTuple) = \sum_{i=1,i\neq m}^M \costDet_i \postProb{i} + \costEst_m \postProb{m} \Var[\paramRV_m\given\Hyp_m, \stat{n}]\,.
\end{align*}
We can now rewrite the overall cost function as
\begin{align*}
 \tilde\rho_n(\stat{n},\postProbTuple) & = \min\left\{\tilde g(\stat{n}, \postProbTuple), \tilde d_n(\stat{n}, \postProbTuple) \right\}\,\quad n<N\,,\\
 \tilde\rho_N(\stat{N},\postProbTuple[N]) & = \tilde g(\stat{N}, \postProbTuple[N])\,,
\end{align*}
where the cost functions for continuing and stopping the test are defined as
\begin{align*}
 \tilde g(\stat{n}, \postProbTuple) &=  \min\left\{\auxVarCostOptTilde{1,n}(\stat{n},\postProbTuple), \ldots, \auxVarCostOptTilde{M,n}(\stat{n},\postProbTuple)\right\}\,, \\
 \tilde d_n(\stat{n}, \postProbTuple) &=  1 + \int \tilde\rho_{n+1}(\transkernel(\xnew), \transkernelPostVar{\postProbTuple}{\xnew}{\stat{n}}) p(\xnew\given\stat{n})\dInt \xnew\,.
\end{align*}
The transition kernel of the posterior probabilities is given by $\postProbTuple[n+1] = \transkernelPostVar{\postProbTuple}{\xnew}{\stat{n}}$. The equivalent of the probability measure defined in \cref{eq:updateProbMeasure} is given by
\begin{align*}
 \updateProbMeasureTilde(B\times\tilde B) = P\left(\left\{\xnew\in\stateSpaceObs: \transkernel(\xnew)\in B, \transkernelPostVar{\postProbwo_n}{\xnew}{\stat{n}} \in\tilde B \right\}\right)\,,
\end{align*}
for all elements $B$ of the Borel $\sigma$-algebra on $\stateSpaceStat$ and all elements $\tilde B$ of the $\sigma$-algebra on $\stateSpacePostProb$. Finally, the counterpart of the boundary of the stopping region defined in \cref{eq:regions} is given by
\begin{align*}
 \stopRegionBoundTilde = \left\{(\stat{n}, \postProbTuple)\in\stateSpaceStat\times\stateSpacePostProb: \tilde g(\stat{n}, \postProbTuple) = \tilde d_n(\stat{n}, \postProbTuple) \right\}\,.
\end{align*}
Before we can prove that the boundary of the stopping region is a P-null set, two auxiliary lemmas have to be introduced.
\begin{lemma} \label{lem:gIneq}
 Let $\textbf{a}=(a_1,\ldots,a_M)$ and let $\textbf{a}\cdot\postProbTuple$ denote the element-wise product. Then for all $a\in\nonNegSet^M$, all $\stat{n}\in\stateSpaceStat$ and all $\postProbTuple\in\stateSpacePostProb$ it holds that
 \begin{align*}
  \min\{a_1,\ldots,a_M,1\}\tilde g(\stat{n},\postProbTuple) \leq \tilde g(\stat{n},\mathbf{a}\cdot\postProbTuple) \leq \max\{a_1,\ldots,a_M,1\}\tilde g(\stat{n},\postProbTuple)\,.
 \end{align*}
\end{lemma}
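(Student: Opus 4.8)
The plan is to exploit that, for fixed $\stat{n}$, each candidate stopping cost $\auxVarCostOptTilde{m,n}(\stat{n},\cdot)$ is a nonnegative linear combination (a conical combination) of the coordinates $\postProb{1},\ldots,\postProb{M}$ of $\postProbTuple$, and that multiplying those coordinates coordinate-wise by a nonnegative vector $\mathbf{a}\in\nonNegSet^M$ rescales any such combination by a factor lying between $\min_j a_j$ and $\max_j a_j$. Passing the minimum over $m$ through these bounds then carries them from the $\auxVarCostOptTilde{m,n}$ to $\tilde g$, and the extra entry $1$ in the $\min$/$\max$ only weakens the estimates.

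First I would record, for each $m\in\{1,\ldots,M\}$,
\begin{align*}
 \auxVarCostOptTilde{m,n}(\stat{n},\postProbTuple)=\sum_{j=1}^{M}c^{m}_{j}(\stat{n})\,\postProb{j},\qquad
 c^{m}_{j}(\stat{n})=\begin{cases}\costDet_j,& j\neq m,\\ \costEst_m\,\Var[\paramRV_m\given\Hyp_m,\stat{n}],& j=m,\end{cases}
\end{align*}
and note that every coefficient $c^{m}_{j}(\stat{n})$ is nonnegative, since the cost coefficients $\costDet_j,\costEst_m$ are nonnegative and a variance is nonnegative, while every $\postProb{j}\ge 0$. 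Replacing $\postProbTuple$ by $\mathbf{a}\cdot\postProbTuple$ merely replaces $\postProb{j}$ by $a_j\postProb{j}$, so
\begin{align*}
 \auxVarCostOptTilde{m,n}(\stat{n},\mathbf{a}\cdot\postProbTuple)=\sum_{j=1}^{M}c^{m}_{j}(\stat{n})\,a_j\,\postProb{j}.
\end{align*}

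The elementary step is the observation that for nonnegative numbers $y_1,\ldots,y_M$ and nonnegative $a_1,\ldots,a_M$ one has $\bigl(\min_{j}a_j\bigr)\sum_j y_j\le\sum_j a_j y_j\le\bigl(\max_{j}a_j\bigr)\sum_j y_j$. Applying it with $y_j=c^{m}_{j}(\stat{n})\postProb{j}\ge 0$ gives, for every $m$,
\begin{align*}
 \Bigl(\min_{1\le j\le M}a_j\Bigr)\auxVarCostOptTilde{m,n}(\stat{n},\postProbTuple)\;\le\;\auxVarCostOptTilde{m,n}(\stat{n},\mathbf{a}\cdot\postProbTuple)\;\le\;\Bigl(\max_{1\le j\le M}a_j\Bigr)\auxVarCostOptTilde{m,n}(\stat{n},\postProbTuple).
\end{align*}
Since these hold for all $m$ and multiplication by a nonnegative scalar is order preserving, taking $\min_{1\le m\le M}$ on each side — and using $\min_m(c\,x_m)=c\,\min_m x_m$ for $c\ge0$ together with $\tilde g(\stat{n},\cdot)=\min_m\auxVarCostOptTilde{m,n}(\stat{n},\cdot)$ — yields
\begin{align*}
 \Bigl(\min_{1\le j\le M}a_j\Bigr)\tilde g(\stat{n},\postProbTuple)\;\le\;\tilde g(\stat{n},\mathbf{a}\cdot\postProbTuple)\;\le\;\Bigl(\max_{1\le j\le M}a_j\Bigr)\tilde g(\stat{n},\postProbTuple).
\end{align*}
Finally, since $\tilde g(\stat{n},\postProbTuple)\ge 0$, $\min\{a_1,\ldots,a_M,1\}\le\min_j a_j$ and $\max\{a_1,\ldots,a_M,1\}\ge\max_j a_j$, adjoining the entry $1$ only weakens both bounds, which is exactly the claimed inequality.

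There is no genuinely hard step here; the only points requiring a little care are the direction of the monotonicity when the scalar factor is pulled out of $\min_m$ (this is where nonnegativity of $\mathbf{a}$ is needed) and the harmless remark that inserting the entry $1$ into the $\min$/$\max$ cannot break the inequality. The entry $1$ is kept only so that this bound matches the one that will be needed for the continuation cost $\tilde d_n$, which carries an additive constant $1$, in the downstream arguments.
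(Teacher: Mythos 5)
Your proof is correct and follows essentially the same route as the paper: bound each stopping cost $\auxVarCostOptTilde{m,n}$, which is a nonnegative linear combination of the posterior probabilities, by scaling with $\min_j a_j$ and $\max_j a_j$, pass the bounds through the minimum over $m$, and note that adjoining the entry $1$ only weakens them since $\tilde g\geq 0$. The only (minor) difference is that you avoid the paper's factoring out of $1/a^\star$, so your version also covers the degenerate case $\min_j a_j=0$ without comment.
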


\begin{proof}
 Since the proof for the upper and lower bound do not differ significantly, only the proof for the lower bound is outlined here. Let $a^\star=\min\{a_1,\ldots,a_M\}$, then it holds that
 \begin{align}\label{eq:gTildeAstar}
  \tilde g(\stat{n}, \mathbf{a}\cdot\postProbTuple) & = \min\left\{\auxVarCostOptTilde{1,n}(\stat{n},\mathbf{a}\cdot\postProbTuple), \ldots, \auxVarCostOptTilde{M,n}(\stat{n},\mathbf{a}\cdot\postProbTuple)\right\} \\
 & = a^\star\min\left\{\frac{1}{a^\star}\auxVarCostOptTilde{1,n}(\stat{n},\mathbf{a}\cdot\postProbTuple), \ldots, \frac{1}{a^\star}\auxVarCostOptTilde{M,n}(\stat{n},\mathbf{a}\cdot\postProbTuple)\right\}\,.
 \end{align}
It further holds that
\begin{align} \label{eq:astarIneq}
 \frac{1}{a^\star}\auxVarCostOptTilde{m,n}(\stat{n},\mathbf{a}\cdot\postProbTuple) = \sum_{i=1,i\neq m}^M \frac{a_i}{a^\star}\costDet_i \postProb{i} + \costEst_m \frac{a_m}{a^\star}\postProb{m} \Var[\paramRV_m\given\Hyp_m, \stat{n}] \geq \auxVarCostOptTilde{m,n}(\stat{n},\postProbTuple)\,,
\end{align}
since $a^\star \leq a_m,\;\forall m\in\{1,\ldots,M\}$.

 Applying \cref{eq:astarIneq} to \cref{eq:gTildeAstar}, yields
 \begin{align*}
  \tilde g(\stat{n}, \mathbf{a}\cdot\postProbTuple) & = \min\left\{\auxVarCostOptTilde{1,n}(\stat{n},\mathbf{a}\cdot\postProbTuple), \ldots, \auxVarCostOptTilde{M,n}(\stat{n},\mathbf{a}\cdot\postProbTuple)\right\} \\
 & = a^\star\min\left\{\frac{1}{a^\star}\auxVarCostOptTilde{1,n}(\stat{n},\postProbTuple), \ldots, \frac{1}{a^\star}\auxVarCostOptTilde{M,n}(\stat{n},\postProbTuple)\right\} \\
 & \geq a^\star\min\left\{\auxVarCostOptTilde{1,n}(\stat{n},\postProbTuple), \ldots, \auxVarCostOptTilde{M,n}(\stat{n},\postProbTuple)\right\} \\
 & \geq \min\{a^\star,1\}\min\left\{\auxVarCostOptTilde{1,n}(\stat{n},\postProbTuple), \ldots, \auxVarCostOptTilde{M,n}(\stat{n},\postProbTuple)\right\} = \min\{a_1,\ldots,a_M,1\}\tilde g(\stat{n}, \postProbTuple)\,,
 \end{align*}
 which is the lower bound stated in \cref{lem:gIneq}.
\end{proof}

\begin{lemma} \label{lem:rhoIneq}
 Let $\mathbf{a}=(a_1,\ldots,a_M)$ and let $\mathbf{a}\cdot\postProbTuple$ denote the element-wise product. Then for all $\mathbf{a}\in\nonNegSet^M$, all $\stat{n}\in\stateSpaceStat$ and all $\postProbTuple\in\stateSpacePostProb$ it holds that
 \begin{align*}
  \min\{a_1,\ldots,a_M,1\}\tilde \rho_n(\stat{n},\postProbTuple) \leq \tilde \rho_n(\stat{n},\mathbf{a}\cdot\postProbTuple) \leq \max\{a_1,\ldots,a_M,1\}\tilde \rho_n(\stat{n},\postProbTuple)\,.
 \end{align*}
\end{lemma}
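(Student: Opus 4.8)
The plan is to prove both inequalities simultaneously by backward induction on $n$, descending from $n=N$, with \cref{lem:gIneq} supplying the estimate for the stopping cost at every step. Throughout, write $\underline a := \min\{a_1,\ldots,a_M,1\}$ and $\overline a := \max\{a_1,\ldots,a_M,1\}$, so that $\underline a \le 1 \le \overline a$. The base case $n=N$ needs nothing beyond \cref{lem:gIneq}, since $\tilde\rho_N(\stat{N},\postProbTuple[N]) = \tilde g(\stat{N},\postProbTuple[N])$.

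For the induction step I would fix $n<N$, assume the claim for $n+1$, and use $\tilde\rho_n = \min\{\tilde g,\tilde d_n\}$. Because $\min$ is monotone, it suffices to bound $\tilde g(\stat{n},\mathbf a\cdot\postProbTuple)$ and $\tilde d_n(\stat{n},\mathbf a\cdot\postProbTuple)$ separately between $\underline a$ and $\overline a$ times their unscaled versions; taking the minimum then yields $\underline a\,\tilde\rho_n(\stat{n},\postProbTuple)\le\tilde\rho_n(\stat{n},\mathbf a\cdot\postProbTuple)\le\overline a\,\tilde\rho_n(\stat{n},\postProbTuple)$. The bound for $\tilde g$ is exactly \cref{lem:gIneq}. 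For $\tilde d_n$, the key observation is that the transition kernel of the posterior probabilities is componentwise homogeneous in its posterior argument: each component of $\transkernelPostVar{\postProbTuple}{\xnew}{\stat{n}}$ equals $\postProb{m}$ times a factor $p(\xnew\given\Hyp_m,\stat{n})/p(\xnew\given\stat{n})$ that depends on $\stat{n}$ and $\xnew$ but not on $\postProbTuple$, so $\transkernelPostVar{\mathbf a\cdot\postProbTuple}{\xnew}{\stat{n}} = \mathbf a\cdot\transkernelPostVar{\postProbTuple}{\xnew}{\stat{n}}$. Substituting this into the definition of $\tilde d_n$, applying the induction hypothesis under the integral sign (pointwise in $\xnew$, then integrating against the nonnegative measure $p(\xnew\given\stat{n})\dInt\xnew$), and using $\overline a\ge1$ to absorb the leading constant via $1+\overline a X \le \overline a(1+X)$ for $X\ge0$ gives $\tilde d_n(\stat{n},\mathbf a\cdot\postProbTuple)\le\overline a\,\tilde d_n(\stat{n},\postProbTuple)$; the lower bound is identical, using $\tilde d_n\ge1$, $\underline a\le1$ and $1+\underline a X\ge\underline a(1+X)$ together with the lower half of the induction hypothesis.

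The routine part is this bookkeeping: the monotonicity of $\min$, the two elementary inequalities for the leading constant, and pushing the induction hypothesis through the integral. The one place that needs care, and the only place where the multiple-hypotheses structure genuinely enters, is making the componentwise-homogeneity claim for the posterior transition kernel rigorous: this requires treating the predictive $p(\xnew\given\stat{n})$ as a function of $\stat{n}$ alone and extending $\tilde\rho_n$, $\tilde g$ and $\tilde d_n$ off the probability simplex to the nonnegative cone on which $\mathbf a\cdot\postProbTuple$ lives, via the same defining formulas (as is already implicitly done in \cref{lem:gIneq}). Once that set-up is in place the homogeneity is immediate and the induction closes; the resulting bound is what feeds into the remaining continuity argument for \cref{lem:BoundPnull}.
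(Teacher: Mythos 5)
Your proposal is correct and follows essentially the same route as the paper: backward induction from $n=N$, the componentwise linearity of the posterior transition kernel $\transkernelPostVar{\mathbf a\cdot\postProbTuple}{\xnew}{\stat{n}}=\mathbf a\cdot\transkernelPostVar{\postProbTuple}{\xnew}{\stat{n}}$, \cref{lem:gIneq} for the stopping cost, the induction hypothesis pushed through the integral, and the elementary absorption of the leading constant $1$ into $\min\{a_1,\ldots,a_M,1\}$ or $\max\{a_1,\ldots,a_M,1\}$. Your explicit bookkeeping with $\underline a$ and $\overline a$ (and the remark about extending the functions off the probability simplex) is, if anything, slightly cleaner than the paper's write-up, which handles only the lower bound and is a bit loose about $\min\{a^\star,1\}$ versus $a^\star$.
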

\begin{proof}
 Since the proofs for the upper and lower bound do not differ significantly, only the proof for the lower bound is outlined here. First of all, it has to be mentioned that the transition kernel relating the posterior probabilities $\postProb{m}$ and $\postProb[n+1]{m}$ is linear in $\postProb{m}$, i.e.,
 \begin{align}\label{eq:transKernIneq}
  \postProb[n+1]{m} = \transkernelPostVar{\postProbwo_n}{\obsScalar{n+1}}{\stat{n}} = \postProbwo_n \frac{p(\obsScalar{n+1}\given\Hyp_m,\stat{n})}{p(\obsScalar{n+1}\given\stat{n})}\,.
 \end{align}
The proof is done via induction. Let $a^\star = \min\{a_1,\ldots,a_M\}$ and assume that \cref{lem:rhoIneq} holds for some $0<n<N$. Then, by applying \cref{eq:transKernIneq}, it holds for $n-1$ that
\begin{align*}
 \tilde \rho_{n-1}(\stat{n-1},\mathbf{a}\cdot\postProbTuple[n-1]) & = \min\left\{ \tilde g(\stat{n-1},\mathbf{a}\cdot\postProbTuple[n-1]), 1 + \int \tilde \rho_{n}(\transkernel[n-1](\obsScalar{n}), \transkernelPostVar{\mathbf{a}\cdot\postProbTuple[n-1]}{\obsScalar{n}}{\stat{n-1}} p(\obsScalar{n}\given\stat{n-1}) \dInt \obsScalar{n} \right\} \\
 & = \min\left\{ \tilde g(\stat{n-1},\mathbf{a}\cdot\postProbTuple[n-1]), 1 + \int \tilde \rho_{n}(\transkernel[n-1](\obsScalar{n}), \mathbf{a}\cdot\transkernelPostVar{\postProbTuple[n-1]}{\obsScalar{n}}{\stat{n-1}} p(\obsScalar{n}\given\stat{n-1}) \dInt \obsScalar{n} \right\}\,.
\end{align*}
By applying \cref{lem:rhoIneq} and \cref{lem:gIneq}, one obtains
\begin{align*}
 \tilde \rho_{n-1}(\stat{n-1},\mathbf{a}\cdot\postProbTuple[n-1]) & \geq  \min\left\{ \tilde g(\stat{n-1},\mathbf{a}\cdot\postProbTuple[n-1]), 1 + a^\star \int \tilde \rho_{n}(\transkernel[n-1](\obsScalar{n}),\transkernelPostVar{\postProbTuple[n-1]}{\obsScalar{n}}{\stat{n-1}}) p(\obsScalar{n}\given\stat{n-1}) \dInt \obsScalar{n} \right\} \\
 & \geq  \min\left\{ a^\star \tilde g(\stat{n-1}, \postProbTuple[n-1]), 1 + a^\star \int \tilde \rho_{n}(\transkernel[n-1](\obsScalar{n}),\transkernelPostVar{\postProbTuple[n-1]}{\obsScalar{n}}{\stat{n-1}}) p(\obsScalar{n}\given\stat{n-1}) \dInt \obsScalar{n} \right\} \\
 & = \min\left\{ a^\star \tilde g(\stat{n-1}, \postProbTuple[n-1]), 1 + a^\star \int \tilde \rho_{n} \dInt \updateProbMeasureTilde[n-1] \right\} \,.
\end{align*}
We can further state that
\begin{align*}
  \min\left\{ a^\star \tilde g(\stat{n-1}, \postProbTuple[n-1]), 1 + a^\star \int \tilde \rho_{n} \dInt \updateProbMeasureTilde[n-1] \right\} & \geq 
   \min\left\{ a^\star \tilde g(\stat{n-1}, \postProbTuple[n-1]), a^\star + a^\star \int \tilde \rho_{n} \dInt \updateProbMeasureTilde[n-1] \right\} \\
   & \geq a^\star \min\left\{ \tilde g(\stat{n-1}, \postProbTuple[n-1]), 1 +  \int \tilde \rho_{n} \dInt \updateProbMeasureTilde[n-1] \right\}\,.
\end{align*}
With these results, we can conclude that \cref{lem:rhoIneq} holds for $n-1$ if it holds for $n$. The induction basis is given by $n=N$, where it holds that
\begin{align*}
 \tilde\rho_N(\stat{N},\mathbf{a}\cdot \postProbTuple[N]) = \tilde g_N(\stat{N},\mathbf{a}\cdot \postProbTuple[N]) \geq a^\star \tilde g_N(\stat{N}, \postProbTuple[N]) = a^\star \rho_N(\stat{N},  \postProbTuple[N])\,.
\end{align*}
\end{proof}

Now, with the help of \cref{lem:rhoIneq} and \cref{lem:gIneq}, \cref{lem:BoundPnull} can be proven easily by contradiction. Assume, that there exists a non-zero probability $\updateProbMeasureTilde(\stopRegionBoundTilde)$ that the test hits the boundary of the stopping region with its next update for some $n<N$. Since the posterior probabilities $\postProbTuple[N]$ are assumed to be continuous random variables a $\stat{n}\in\stateSpaceStat$ and an interval $[\postProbTuple^\bullet, \mathbf{a}\cdot\postProbTuple^\bullet]$ with $\mathbf{a}=(a_1,\ldots,a_M)$ and $a_m>1$ for all $m=1,\ldots,M$ have to exist for which the costs for stopping and continuing are equal. Mathematically, this can be written as
\begin{align}\label{eq:eqCostAssumption}
 \tilde g(\stat{n},\mathbf{a}\cdot \postProbTuple[n]) = 1 + \int \tilde\rho_{n+1} \dInt\updateProbMeasureTilde \quad \forall \postProbwo_n \in [\postProbTuple^\bullet, \mathbf{a}\cdot\postProbTuple^\bullet]\,.
\end{align}
With the previous results we can conclude that
\begin{align*}
 1 + \int \tilde\rho_{n+1} \dInt\updateProbMeasureTildeScaled{\mathbf{a}\cdot} & = 1 + \int \tilde \rho_{n+1}(\transkernel(\obsScalar{n+1}), \transkernelPostVar{\mathbf{a}\cdot\postProbTuple}{\obsScalar{n+1}}{\stat{n}}) p(\obsScalar{n+1}\given\stat{n}) \dInt \obsScalar{n+1} \\
 & = 1 + \int \tilde \rho_{n+1}(\transkernel(\obsScalar{n+1}), \mathbf{a} \cdot\transkernelPostVar{\postProbTuple}{\obsScalar{n+1}}{\stat{n}}) p(\obsScalar{n+1}\given\stat{n}) \dInt \obsScalar{n+1} \\ 
 & \geq 1 + a^\star \int \tilde \rho_{n+1}(\transkernel(\obsScalar{n+1}), \transkernelPostVar{\postProbTuple}{\obsScalar{n+1}}{\stat{n}}) p(\obsScalar{n+1}\given\stat{n}) \dInt \obsScalar{n+1} \\
 & > a^\star + a^\star\int \tilde \rho_{n+1}(\transkernel(\obsScalar{n+1}), \transkernelPostVar{\postProbTuple}{\obsScalar{n+1}}{\stat{n}}) p(\obsScalar{n+1}\given\stat{n}) \dInt \obsScalar{n+1} \\
 & = a^\star\left(1 + \int \tilde\rho_{n+1} \dInt\updateProbMeasureTildeScaled{\mathbf{a}\cdot}\right)
 = a^\star g(\stat{n},\postProbTuple) \geq g(\mathbf{a}\cdot\postProbTuple)\,,
\end{align*}
where $a^\star=\min\{a_1,\ldots,a_M\}$. The first and last inequality are to \cref{lem:rhoIneq} and \cref{lem:gIneq}, respectively. This contradicts the assumption that there exist a $\stat{n}\in\stateSpaceStat$ and an interval $[\postProbTuple^\bullet, \mathbf{a}\cdot\postProbTuple^\bullet]$ in which the costs for stopping and continuing the test are equal.
Due to the fact that $\stopRegionBoundTilde$ and $\stopRegionBound$ only differ in their representation, this contradiction also implies that $\updateProbMeasure(\stopRegionBound)=0, \; \forall \stat{n}\in\stateSpaceStat$. This concludes the proof. \hfill\qed

\section{Proof of Lemma \ref{lem:derivatives}}\label{app:proofTheoDerivatives}
Let
\begin{align*}
 \rho^\prime_{n,\costDet_m}(\stat{n}) & = \frac{\partial\rho_n(\stat{n})}{\partial \costDet_m} = \begin{cases}
      g_{\costDet_m}^\prime(\stat{n}) & \text{for } \stat{n}\in\stopRegion \\
       \frac{\partial}{\partial \costDet_m} \int\rho_{n+1}\dInt\updateProbMeasure& \text{for } \stat{n}\in\stopRegionCompl
     \end{cases}
\end{align*}denote the derivative of $\rho_n$ with respect to $\costDet_m$, which is defined everywhere on $\stateSpaceStat\setminus\stopRegionBound$.
Assume for now that the order of differentiation and integration can be interchanged, i.e.,
\begin{align*}
 \frac{\partial}{\partial \costDet_m} \int\rho_{n+1}\dInt\updateProbMeasure = \int\rho_{n+1,\costDet_m}^\prime\dInt\updateProbMeasure\,.
\end{align*}
In general, the derivative of $\rho_n$ with respect to $\costDet_m$ can now be written as
\begin{align*}
 \rho^\prime_{n,\costDet_m}(\stat{n}) = \ind{\stopRegion}g_{\costDet_m}^\prime(\stat{n}) + \ind{\stopRegionCompl}\int\rho_{n+1,\costDet_m}^\prime\dInt\updateProbMeasure\,.
\end{align*}
On the stopping region $\stopRegion$, it holds that
\begin{align*}
 g_{\costDet_m}^\prime(\stat{n}) & = \sum_{i=1,i\neq m}^M\ind{\stopRegionDec{n}{i}} p(\Hyp_m\given\stat{n})\,.
\end{align*}
With the short-hand notation
\begin{align*}
 z_n^m = \frac{p(\stat{n}\given\Hyp_m)}{p(\stat{n})}
\end{align*}
and the property $p(\stat{n}\given\stat{n-1})=p(\obsScalar{n}\given\stat{n-1})$, we can write
\begin{align}
 g_{\costDet_m}^\prime(\stat{n}) & = \sum_{i=1,i\neq m}^M\ind{\stopRegionDec{n}{i}} p(\Hyp_m\given\stat{n}) = \sum_{i=1,i\neq m}^M\!\!\!\ind{\stopRegionDec{n}{i}} p(\Hyp_m)z_{n-1}^m\frac{p(\obsScalar{n}\given\Hyp_m,\stat{n-1})}{p(\obsScalar{n}\given\stat{n-1})}\,. \label{eq:gPrimeCostDet}
\end{align}
With the use of \cref{eq:gPrimeCostDet}, we can further state that
\begin{align*}
 \int\rho_{n+1,\costDet_m}^\prime\dInt\updateProbMeasure &  = \sum_{i=0,i \neq m}^{M}\int\displaylimits_{\{\transkernel\in\stopRegionDec{n+1}{i}\}}\!\!\! p(\Hyp_m)z_{n}^mp(\xnew\given\Hyp_m,\stat{n})\dInt\xnew  + \int_{\{\stopRegionCompl[n+1]\}} \rho^\prime_{n+1,\costDet_m}\dInt\updateProbMeasure \\
 & = \sum_{i=0,i \neq m}^{M}p(\Hyp_m)z_{n}^m\updateProbMeasure^m\bigl(\stopRegionDec{n+1}{i}\bigr) + \int_{\{\stopRegionCompl[n+1]\}} \rho^\prime_{n+1,\costDet_m}\dInt\updateProbMeasure\\
  & = p(\Hyp_m)z_{n}^m\updateProbMeasure^m\bigl(\stopRegionDec{n+1}{\bar{m}}\bigr)  + \int_{\{\stopRegionCompl[n+1]\}} \rho^\prime_{n+1,\costDet_m}\dInt\updateProbMeasure\,.
\end{align*}
Hence, the derivative of $\rho_n$ with respect to $\costDet_m$ is
\begin{align*}
 \rho_{n,\costDet_m}^\prime(\stat{n}) & = \ind{\stopRegionDec{n}{\bar m}} p(\Hyp_m) z^m_n  + \ind{\stopRegionCompl}\biggl( p(\Hyp_m)z_{n}^m\updateProbMeasure^m\bigl(\stopRegionDec{n+1}{\bar{m}}\bigr) + \int_{\{\stopRegionCompl[n+1]\}} \rho^\prime_{n+1,\costDet_m}\dInt\updateProbMeasure      \biggr)\,,
\end{align*}
which is the expression stated in \cref{lem:derivatives}.
It is left to show that the order of the integration and the differentiation with respect to $\costDet_m$ can be interchanged.
According to the differentiation lemma \citep[Lemma 16.2.]{bauer2001measure} this holds if and only if
\begin{enumerate}
 \item The function $\rho_{n+1}(\stat{n+1})$ has to be $\updateProbMeasure$-integrable for all $0\leq n < N$ and all $\stat{n}\in\stateSpaceStat$.
 \item The function $\rho_n(\stat{n})$ has to be differentiable for all $0\leq n \leq N$ and all $\stat{n}\in\stateSpaceStat$.
 \item A set of functions $h_n^{m}$ and $f_n^m$ has to exist, which are independent of $\costDet_m$ and $\costEst_m$, respectively. It must further hold
 \begin{align}
  \lvert \rho^\prime_{n,\costDet_m}(\stat{m}) \rvert & \leq h_n^m(\stat{n})\,,\quad \forall m\in\{1,\ldots,M\}\,, \\
  \lvert \rho^\prime_{n,\costEst_m}(\stat{m}) \rvert & \leq f_n^m(\stat{n})\,,\quad \forall m\in\{1,\ldots,M\}\,.
 \end{align}
\end{enumerate}
Condition 1 is true by \cref{corr:integrable}, but conditions 2 and 3 have still to be proven. The proof is only carried out for the derivatives with respect to $\costDet_m$, since it can be proven analogously with the derivatives with respect to $\costEst_m$.
Assume that the differentiation lemma holds for some $n\geq 1$ and some $m\in\{1,\ldots,M\}$. It has now to be shown that the differentiation lemma holds for $n-1$ as well. On the stopping region, the derivative is given by
\begin{align*}
 \rho_{n-1,\costDet_m}^\prime(\stat{m}) = \ind{\stopRegionDec{n}{\bar m}} p(\Hyp_m\given\stat{n})\,,
\end{align*}
which is well defined and bounded.
On the complement of the stopping region, it holds that
\begin{align}
 \frac{\partial }{\partial \costDet_m} \rho_{n-1}(\stat{n})& = \frac{\partial}{\partial \costDet_m} \int \rho_{n} \dInt \updateProbMeasure[n-1] = 
 p(\Hyp_m)z_n^m\updateProbMeasure[n-1]^m(\stopRegionDec{n}{\bar m}) + \int_{\stopRegionCompl[n]} \rho^\prime_{n,\costDet_m} \dInt\updateProbMeasure[n-1]\,.
\end{align}
Due to the assumption that the differentiation lemma already holds for $n$, we can show that for $\stat{n}\in\stopRegionCompl$ it holds that
\begin{align*}
  \left\lvert \frac{\partial }{\partial \costDet_m} \rho_{n-1}(\stat{n})\right\rvert &  = \left\lvert
 p(\Hyp_m)z_n^m\updateProbMeasure[n-1]^m(\stopRegionDec{n}{\bar m}) + \int_{\stopRegionCompl[n]} \rho^\prime_{n,\costDet_m} \dInt\updateProbMeasure[n-1] \right\rvert \\
 & \leq \left\lvert  p(\Hyp_m)z_n^m\updateProbMeasure[n-1]^m(\stopRegionDec{n}{\bar m})\right\vert + \left\lvert\int_{\stopRegionCompl[n]} \rho^\prime_{n,\costDet_m} \dInt\updateProbMeasure[n-1] \right\rvert \\
 & =  p(\Hyp_m)z_n^m\updateProbMeasure[n-1]^m(\stopRegionDec{n}{\bar m}) + \left\lvert\int_{\stopRegionCompl[n]} \rho^\prime_{n,\costDet_m} \dInt\updateProbMeasure[n-1] \right\rvert \\
 & =  p(\Hyp_m)z_n^m\updateProbMeasure[n-1]^m(\stopRegionDec{n}{\bar m}) + \int_{\stopRegionCompl[n]} h_n^m \dInt\updateProbMeasure[n-1]\,.
\end{align*}
Hence, the derivative is bounded on $\stopRegionCompl$ as well. For the induction basis $\rho_N$ it holds that
\begin{align*}
 \rho_{N,\costDet_m}^\prime = \ind{\stopRegionDec{N}{\bar m}} p(\Hyp_m\given\stat{N}) < \infty\,,
\end{align*}
and therefore
\begin{align*}
 \left\lvert\rho_{N,\costDet_m}^\prime\right\rvert = \ind{\stopRegionDec{N}{\bar m}} p(\Hyp_m\given\stat{N}) = h_N^m(\stat{n})\,.
\end{align*}
This concludes the proof. \hfill\qed

\section{Proof of Theorem \ref{theo:derivativesPerformanceMeasures}} \label{app:proofDerivativesPerformanceMeasures}
Assuming that the optimal policy as stated in \cref{corr:optPolicy} is used, the detection errors can be written as:
\begin{align}\label{eq:DetErrPiecewise}
 \errorDet{n}{m}(\stat{n}) & = \begin{cases}
                                0 & \text{for} \quad \stat{n} \in \stopRegionDec{n}{m} \\
                                1 & \text{for} \quad \stat{n} \in \stopRegionDec{n}{\bar{m}}\\
                                \E[\errorDet{n+1}{m}(\stat{n+1})\given\Hyp_m,\stat{n}] & \text{for} \quad \stat{n} \in \stopRegionCompl
                               \end{cases}
\end{align}The expected value in \cref{eq:DetErrPiecewise} can be rewritten as
\begin{align*}
 \E[\errorDet{n+1}{m}(\stat{n+1})\given\Hyp_m,\stat{n}] & = \int \errorDet{n+1}{m} \dInt\updateProbMeasure^m \\
& = \int_{\{\transkernel\in\stopRegion\}}  \sum_{i=1,i\neq m}^M\ind{\transkernel\in\stopRegionDec{n+1}{m}}\dInt\updateProbMeasure^m  + \int_{\{\transkernel\in\stopRegionCompl[n+1]\}}  \errorDet{n+1}{m} \dInt\updateProbMeasure^m \\
& =  \updateProbMeasure^m(\stopRegionDec{n+1}{\bar{m}}) + \int_{\{\transkernel\in\stopRegionCompl[n+1]\}} \errorDet{n+1}{m} \dInt\updateProbMeasure^m\,.
\end{align*}
For $\stat{n}\in\stopRegionDec{n}{m}$, it holds that
\begin{align}
 \frac{\rho_{n,\costDet_m}^\prime(\stat{n})}{p(\Hyp_m)z_n^m} = & \frac{0}{p(\Hyp_m)z_n^m} = 0 = \errorDet{n}{m}(\stat{n})
\end{align}and for $\stat{n}\in\stopRegionDec{n}{\bar{m}}$, it further holds that
\begin{align*}
 \frac{\rho_{n,\costDet_m}^\prime(\stat{n})}{p(\Hyp_m)z_n^m} = & \frac{p(\Hyp_m)z_n^m}{p(\Hyp_m)z_n^m} = 1 = \errorDet{n}{m}(\stat{n})\,,
\end{align*}which is exactly the detection error on the stopping region.
It is now left to show that $\errorDet{n}{m}=\frac{\rho_{n,\costDet_m}^\prime(\stat{n})}{p(\Hyp_m)z_n^m}$ also solves \cref{eq:DetErrPiecewise} on the complement of the stopping region. For $\stat{n}\in\stopRegionCompl$, it holds that
\begin{align*}
 \frac{\rho_{n,\costDet_m}^\prime(\stat{n})}{p(\Hyp_m)z_n^m} & = \updateProbMeasure^m(\stopRegionDec{n+1}{\bar m})+ \int_{\{\transkernel\in\stopRegionCompl[n+1]\}} \frac{\rho_{n+1,\costDet_m}^\prime(\transkernel(\xnew))}{p(\Hyp_m)z_{n+1}^m} p(\xnew\given\Hyp_m,\stat{n}) \dInt\xnew\,, \\
\frac{\rho_{n,\costDet_m}^\prime(\stat{n})}{p(\Hyp_m)z_n^m} & = \updateProbMeasure^m(\stopRegionDec{n+1}{\bar m}) + \!\!\!\!\int\displaylimits_{\{\transkernel\in\stopRegionCompl[n+1]\}} \!\!\!\frac{\rho_{n+1,\costDet_m}^\prime(\transkernel(\xnew))}{p(\Hyp_m)z_{n}^m \frac{p(\xnew\given\Hyp_m,\stat{n})}{p(\xnew\given\stat{n})}} p(\xnew\given\Hyp_m,\stat{n}) \dInt\xnew\,,\\
\frac{\rho_{n,\costDet_m}^\prime(\stat{n})}{p(\Hyp_m)z_n^m} & = \updateProbMeasure^m(\stopRegionDec{n+1}{\bar m}) + \frac{1}{p(\Hyp_m)z_{n}^m} \!\int\displaylimits_{\{\transkernel\in\stopRegionCompl[n+1]\}}\!\! \!\!\!\!\!\!\rho_{n+1,\costDet_m}^\prime(\transkernel(\xnew)) p(\xnew\given\stat{n}) \dInt\xnew\,, \\
\rho_{n,\costDet_m}^\prime(\stat{n}) & = p(\Hyp_m)z_{n}^m\updateProbMeasure^m(\stopRegionDec{n+1}{\bar m}) + \!\!\!\!\int\displaylimits_{\{\transkernel\in\stopRegionCompl[n+1]\}}\!\!\!\rho_{n+1,\costDet_m}^\prime\dInt\updateProbMeasure\,,
 \end{align*}which is true by \cref{lem:derivatives}. Hence, \cref{theo:derivativesPerformanceMeasures} holds for $\errorDet{n}{m}$. \hfill\qed

\section{Proof of Theorem \ref{theo:optCostCoeff}}\label{sec:proofOptCoeff}
It has to be shown that
\begin{align}
 & \max_{\costDet\geq0,\costEst\geq0}\, L_{\detConstr[],\estConstr[]}(\costDet,\costEst) =  \max_{\costDet\geq0,\costEst\geq0} \,\left\{ \rho_0(\stat{0}) \!-\! \sum_{m=1}^M p(\Hyp_m)(\costDet_m\detConstr + \costEst_m\estConstr)\right\} \label{eq:LagrDualApp}
\end{align}
attains its maximum for some non-negative and finite values of $\costDet^\star, \costEst^\star$ and that the solutions of \cref{eq:LagrDualApp} and \cref{eq:constrProblem} coincide.

By applying \cref{lem:derivatives,theo:derivativesPerformanceMeasures}, one obtains
\begin{align*}
\frac{\partial}{\partial \costDet_m}L_{\detConstr[],\estConstr[]}(\costDet,\costEst) & = p(\Hyp_m)(\errorDet{}{m} - \detConstr)\,,\; m=1,\ldots,M\,,\\
\frac{\partial}{\partial \costEst_m}L_{\detConstr[],\estConstr[]}(\costDet,\costEst) & = p(\Hyp_m)(\errorEst{}{m} - \estConstr)\,,\; m=1,\ldots,M\,.
\end{align*}
Since all constraints in \cref{eq:constrProblem} are inequality constraints, $\costDet^\star$/$\costEst^\star$ are solutions of \cref{eq:constrProblem} if $\costDet^\star$/$\costEst^\star$ are positive and the corresponding derivative vanishes, or if $\costDet^\star$/$\costEst^\star$ are zero and the corresponding derivative is negative. The first case, i.e., when the gradient vanishes, holds, if

\begin{align*}
 \errorDet{}{m} & = \detConstr \quad\text{and}\quad \errorEst{}{m}  = \estConstr\,,\quad m=1,\ldots,M\,,
\end{align*}i.e., the constraints are fulfilled with equality. It now has to be shown that these $\costDet^\star,\costEst^\star$ are non-negative and finite. This is only outlined for $\costDet^\star$, since it can be shown similarly for $\costEst^\star$. We first consider the limit
\begin{align*}
\lim_{\costDet_m\rightarrow\infty}\frac{\partial}{\partial \costDet_m}L_{\detConstr[],\estConstr[]}(\costDet,\costEst) & = p(\Hyp_m)(0 - \detConstr)<0\,,
\end{align*}
which contradicts the fact that an infinitely large $\costDet_m$ is a solution of \cref{eq:LagrDualApp}, since a negative gradient would only result in a maximum if and only if  $\costDet^\star_m=0$. Next, the gradient at $\costDet_m^\star=0$ needs closer inspection, i.e.,
\begin{align*}
\frac{\partial }{\partial \costDet_m}L_{\detConstr[],\estConstr[]}(\costDet,\costEst)\bigg\rvert_{\costDet_m^\star=0} = p(\Hyp_m)(\errorDet{}{m} - \detConstr)\,.
\end{align*}
At this point, it has to be mentioned again that the cost for rejecting hypothesis $\Hyp_m$ does not only depend on $\costDet_m$ but rather on all $\costDet_i$ as well as on $\costEst_j$, $j\in\{1,\ldots,M\}\setminus m$. Hence, even if $\costDet^\star_m=0$, i.e., the constraint on the error probability under $\Hyp_m$ is not enforced with equality, it can still be satisfied implicitly, as a consequence of the remaining constraints.
We distinguish the two cases in which the resulting detection error is smaller and larger than the error constraint $\detConstr$, respectively.
In the former case, the gradient is negative. This results, in combination with $\costDet_m^\star=0$, in an optimum (complementary slackness).
In the case where the resulting detection error is larger than the constraint, the gradient becomes positive. This contradicts the assumption that $\costDet_m^\star$ is an optimal solution.
Due to the concavity of $L_{\detConstr[],\estConstr[]}(\costDet,\costEst)$ and the fact that an infinitely large $\costDet_m$ results in a negative gradient, a positive gradient for $\costDet_m=0$ implies that there exists a positive and finite $\costDet_m^\star$ such that the gradient vanishes.
That is, the designed sequential scheme fulfills the requirements and is of minimum run-length by definition.
It can now be easily shown that the optimal objective is the average run-length
\begin{align*}
  L_{\detConstr[],\estConstr[]}(\costDet^\star,\costEst^\star) & =  \rho_0(\stat{0}) - \sum_{m=1}^M p(\Hyp_m)(\costDet_m^\star\detConstr + \costEst_m^\star\estConstr) \\
& = \E[\tau] + \sum_{m=1}^M p(\Hyp_m)(\costDet_m^\star\errorDet{}{m} + \costEst_m^\star\errorEst{}{m}) - \sum_{m=1}^M p(\Hyp_m)(\costDet_m^\star\detConstr + \costEst_m^\star\estConstr)\\
  & = \E[\tau]\,.
\end{align*}
This concludes the proof. \hfill\qed

\section{Derivation of the Posterior Probabilities of the Hypotheses} \label{app:derPostH}
According to Bayes' theorem, we can calculate the posterior probability as
\begin{align*}
 p(\Hyp_m\given\stat{n}) = \frac{p(\stat{n}\given\Hyp_m)p(\Hyp_m)}{p(\stat{n})}\,.
\end{align*}The marginal density is given by
\begin{align}
 p(\stat{n}\given\Hyp_m) & = \int p(\stat{n}\given\Hyp_m,\sigma^2)p(\sigma^2)\dInt\sigma^2 \nonumber\\
			 \begin{split}
			    & =  \int \bigl(2\pi\sigma^2\bigr)^{-\frac{n}{2}}\!\exp\biggl(\!\!-\frac{\sum_{k=1}^n(x_k-A_m)^2}{2\sigma^2}\biggr) \exp\biggl(-\frac{\scale}{\sigma^2}\biggr)\frac{\scale^\shape}{\Gamma(\shape)}  \dInt\sigma^2\,.
			     \end{split}
\end{align}Using the parametrization given in \cref{eq:paramPostShape,eq:paramPostScale}, we obtain
\begin{align*}
 p(\stat{n}\given\Hyp_m) &=   \bigl(2\pi\bigr)^{-\frac{n}{2}} \frac{\scale^\shape}{(\scalePost)^\shapePost} \frac{\Gamma(\shapePost)}{\Gamma(\shape)} \int \invGam(\sigma^2\given\shapePost,\scalePost)\dInt\sigma^2 \\
                          &=  \bigl(2\pi\bigr)^{-\frac{n}{2}} \frac{\scale^\shape}{(\scalePost)^\shapePost} \frac{\Gamma(\shapePost)}{\Gamma(\shape)}\,.
\end{align*}
Hence, the posterior probabilities are given by
\begin{align}
 p(\Hyp_m\given\stat{n}) =  K p(\Hyp_m) \bigl(2\pi\bigr)^{-\frac{n}{2}} \frac{\scale^\shape}{(\scalePost)^\shapePost} \frac{\Gamma(\shapePost)}{\Gamma(\shape)}\label{eq:post}\,,
\end{align}
where the normalization constant $K$ is given by
\begin{align}
 K=\left(\sum_{m=1}^M p(\Hyp_m\given\stat{n})\right)^{-1} \label{eq:postNormConst}\,.
\end{align}

\section{Derivation of the Posterior Predictive} \label{app:derPostPred}
The posterior predictive can be factorized as
\begin{align}\label{eq:postPredGen}
 p(\xnew\given\stat{n}) = \sum_{m=1}^M p(\xnew\given\Hyp_m,\stat{n})p(\Hyp_m\given\stat{n})\,.
\end{align}The conditional posterior predictive can now by calculated as
\begin{align*}
 p(\xnew\given\Hyp_m,\stat{n}) & = \int p(\xnew\given\Hyp_m,\sigma^2) p(\sigma^2\given\Hyp_m,\stat{n})\dInt\sigma^2\\
 & = \int \frac{(\scalePost)^{\shapePost}}{\Gamma(\shapePost)} (\sigma^2)^{-\shapePost - 1}\exp\biggl(-\frac{\scalePost}{\sigma^2}\biggr) (2\pi\sigma^2)^{-0.5}\exp\left(-\frac{(\xnew-A_m)^2}{2\sigma^2}\right)\dInt\sigma^2 \\
 & = \frac{1}{\sqrt{2\pi}} \frac{(\scalePost)^\shapePost}{\Gamma(\shapePost)} \int  (\sigma^2)^{-\shapePost - 1.5}\exp\biggl(-\frac{\scalePost + 0.5(\xnew-A_m)^2}{\sigma^2}\biggr) \dInt\sigma^2\,.
\end{align*}
Substituting
\begin{align*}
 \shapePP & = \shapePost + 0.5 \quad \text{and} \quad
 \scalePP  = \scalePost + 0.5(\xnew-A_m)^2\,,
\end{align*}we can rewrite the conditional posterior predictive as
\begin{align}
\begin{split}\label{eq:condPostPredParam}
 p(\xnew\given\Hyp_m,\stat{n}) & = \frac{1}{\sqrt{2\pi}} \frac{(\scalePost)^\shapePost}{\Gamma(\shapePost)} \\
  & \times \int  (\sigma^2)^{-\shapePP - 1}\exp\left(-\frac{\scalePP}{\sigma^2}\right) \dInt\sigma^2 \,.
\end{split}
\end{align}
As the integrand in \cref{eq:condPostPredParam} is the scaled \ac{pdf} of the inverse Gamma distribution, we simplify \cref{eq:condPostPredParam} to
\begin{align}
p(\xnew\given\Hyp_m,\stat{n}) & = \frac{1}{\sqrt{2\pi}} \frac{(\scalePost)^\shapePost}{\Gamma(\shapePost)} \frac{\Gamma(\shapePP)}{(\scalePP)^\shapePP} \label{eq:condPostPred}\,.
\end{align}
Inserting \cref{eq:condPostPred} and \cref{eq:post} into \cref{eq:postPredGen} and using $K$ as defined in \cref{eq:postNormConst}, the posterior predictive is given by
\begin{align*}
 p(\xnew\given\stat{n}) = K \bigl(2\pi\bigr)^{-\frac{n+1}{2}} \sum_{m=1}^M p(\Hyp_m) \frac{\Gamma(\shapePP)}{\Gamma(\shape)}  \frac{\scale^\shape}{(\scalePP)^\shapePP}\,.
\end{align*}

\section*{Acknowledgements}
The authors would like to thank the anonymous reviewers and the editors for their time and effort.
The work of Dominik Reinhard was supported by the German Research Foundation (DFG) under grant number 390542458.
The work of Michael Fau\ss{} was supported by the German Research Foundation (DFG) under grant number 424522268.
 
\bibliographystyle{tandfx}
\setcitestyle{authoryear, open={(},close={)}}
\bibliography{IEEEabrv,mrabbrev,references,applications} \end{document}